\theoremstyle{plain}
\newtheorem{thm}{Theorem}
\numberwithin{thm}{section}
\newtheorem{prop}[thm]{Proposition}
\newtheorem{claim}[thm]{Claim}
\newtheorem{lem}[thm]{Lemma}
\theoremstyle{definition}
\newtheorem{df}{Definition}
\numberwithin{df}{section}
\theoremstyle{remark}
\newtheorem*{rem}{Remark}
\newcommand{\CSP}{\operatorname{CSP}}
\newcommand{\poly}{\operatorname{poly}}
\newcommand{\Ham}{\operatorname{Ham}}
\newcommand{\ar}{\operatorname{ar}}
\newcommand{\meas}{\operatorname{meas}}
\title{\textbf{Bridging between 0/1 and Linear Programming via Random Walks}}
\author{Joshua Brakensiek\thanks{Department of Computer Science, Stanford University, Stanford, CA. Email: {\tt jbrakens@stanford.edu}. Some of this work was done when the author was at Carnegie Mellon University. Research supported in part by NSF CCF-1526092, and an NSF Graduate Research Fellowship.} \and Venkatesan Guruswami\thanks{Computer Science Department, Carnegie Mellon University, Pittsburgh, PA 15213. Email: {\tt venkatg@cs.cmu.edu}. Research supported in part by NSF grants CCF-1422045 and CCF-1526092.}}
\date{}
\begin{document}

\maketitle
\thispagestyle{empty}

\begin{abstract}
Under the Strong Exponential Time Hypothesis, an integer linear program with $n$ Boolean-valued variables and $m$ equations cannot be solved in $c^n$ time for any constant $c < 2$. If the domain of the variables is relaxed to $[0,1]$, the associated linear program can of course be solved in polynomial time. In this work, we give a natural algorithmic bridging between these extremes of $0$-$1$ and linear programming. Specifically, for any subset (finite union of intervals) $E \subset [0,1]$ containing $\{0,1\}$, we give a random-walk based algorithm with runtime $O_E((2-\text{measure}(E))^n\poly(n,m))$ that finds a solution in $E^n$ to any  $n$-variable linear
 program with $m$ constraints that is feasible over $\{0,1\}^n$. Note that as $E$ expands from $\{0,1\}$ to $[0,1]$, the runtime improves smoothly from $2^n$ to polynomial. 
 
 \smallskip
Taking $E = [0,1/k) \cup (1-1/k,1]$ in our result yields as a corollary a randomized $(2-2/k)^{n}\poly(n)$ time algorithm for
$k$-SAT. While our approach has some high level resemblance to Sch\"{o}ning's beautiful algorithm, our general algorithm is based on a more sophisticated random walk that incorporates several new ingredients, such as a multiplicative potential to measure progress, a judicious choice of starting distribution, and a time varying distribution for the evolution of the random walk that is itself computed via an LP at each step (a solution to which is guaranteed based on the minimax theorem).  Plugging the LP algorithm into our earlier polymorphic framework yields fast exponential algorithms for any CSP (like $k$-SAT, $1$-in-$3$-SAT, NAE $k$-SAT) that admit so-called ``threshold partial polymorphisms."

\end{abstract}

\newpage

\def\exponent{c}
\section{Introduction}\label{sec:intro}

The study of exponential time algorithms for NP-hard problems has been a thriving area of research, and clever algorithms much faster than naive
brute-force methods have been devised for many problems. For instance,
the canonical NP-complete problem $3$-SAT admits an elegant
$O^\ast((4/3)^n)$ time randomized algorithm~\cite{DBLP:conf/focs/Schoning99} (we use
$O^\ast(\cdot)$ to hide $\text{poly}(n)$ factors, and $n$ is the
number of variables). The runtime of Sch\"{o}ning's algorithm is
$O^\ast((2-2/k)^n)$ for $k$-SAT, and the algorithm has been
derandomized to achieve similar runtime bounds
deterministically~\cite{MoserScheder2011}. Even faster algorithms running
in time $(1.308)^n$ and $(1.469)^n$ are known for $3$-SAT and $4$-SAT
respectively~\cite{Hertli2014}. A survey (albeit not very recent) of exact
exponential time algorithms for NP-complete problems appears as
\cite{Woeginger2003}.

However, there are problems such as general CNF-SAT for which no
algorithm with runtime $c^n$ is known for any constant $c < 2$.  The
strong exponential time hypothesis (SETH)~\cite{IP-seth} asserts that
in fact no such algorithm exists, and indeed that for every $\epsilon >
0$, there exists some $k$ such that $k$-SAT can't be solved in
$(2-\epsilon)^n$ time. Such ``SETH-hardness'' is also known for problems
such as set cover and Not-all-Equal-SAT~\cite{CyganEtAl2016}. The class
of $0$-$1$ integer linear programs, which includes satisfiability, set
cover, and most natural problems with Boolean-valued variables, also
doesn't admit a $c^n$ time algorithm for any $c < 2$. On the other
hand, linear programs where variables are allowed to take values in $[0,1]$
can of course be solved in polynomial time.

Our main result stated below bridges between these extremes, giving an
algorithm with progressively better runtimes as one allows more and
more relaxed values for the Boolean variables. 
As the allowed set $E$ of values shrinks from $[0,1]$ to $\{0,1\}$, the runtime of our algorithm degrades gracefully from polynomial to $2^n$. 
\begin{thm}[Main]
\label{thm:intro-main}
Let $E \subset [0,1]$ be a finite union of intervals that contains
$\{0,1\}$ which is given explicitly: $E = [0, d_1] \cup \cdots \cup [c_k, 1]$. Let $\meas(E)$ be its measure (total length). There is a randomized algorithm that on input an $n$-variable linear
program with $m$ constraints that is feasible over $\{0,1\}^n$, 
runs in time $c_E(2-\meas(E))^{n}\poly(n, m)$, where $c_E$ is a constant depending on the structure of $E$, and with high probability finds a feasible solution to the LP that belongs to $E^n$.
\end{thm}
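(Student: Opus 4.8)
The plan is to follow the high-level shape of Sch\"oning's algorithm --- sample a random point, repeatedly repair a violated constraint by a local move, restart after polynomially many steps --- but with three $E$-driven modifications. Fix, for the analysis only, a $\{0,1\}$-valued feasible point $x^\star\in\{0,1\}^n$ (it exists by hypothesis; the algorithm never uses it). The algorithm maintains $y\in E^n$; a trial starts by drawing $y$ from a product distribution $\mathcal D_0=\nu^{\otimes n}$, where $\nu$ is a (finitely supported) distribution on $E$ to be chosen, and then runs for $\poly(n,m)$ steps: if $y$ satisfies all constraints, output it; otherwise pick a violated constraint, say one requiring $a\cdot y\le b$ although $a\cdot y>b$ (split equalities into two inequalities; the $\ge$ case is symmetric), and replace $y$ by $y'$ that differs from $y$ in a single coordinate, whose new value lies in a fixed finite grid $G\subset E$ containing $\{0,1\}$, all interval endpoints and $\operatorname{supp}(\nu)$, and is sampled from a distribution $\mu$ over moves that depends only on the violated constraint. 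After $\poly(n,m)$ unsuccessful steps the trial restarts, and $N=c_E(2-\meas(E))^n\poly(n,m)$ independent trials are run; output the first feasible point found.

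The analysis rests on a multiplicative potential $\Psi_{x^\star}(y)=\prod_{i=1}^n\psi(y_i,x^\star_i)\in(0,1]$, where for $b\in\{0,1\}$ the weight $\psi(\cdot,b)\colon E\to(0,1]$ equals $1$ at $t=b$ and decreases as $t$ moves across $E$ away from $b$; informally $\Psi_{x^\star}(y)$ is a lower bound on the probability that a trial begun at $y$ ever reaches $x^\star$. We need two properties of $\nu$ and $\psi$ together. \emph{(Start):} for $b\in\{0,1\}$, $\E_{t\sim\nu}[\psi(t,b)]\ge 1/(2-\meas(E))$, so by independence $\E_{\mathcal D_0}[\Psi_{x^\star}]\ge (2-\meas(E))^{-n}$. \emph{(Drift):} for every violated constraint there is a move distribution $\mu$, computable by an LP, with $\E_{y'\sim\mu}[\Psi_{x^\star}(y')]\ge\Psi_{x^\star}(y)$ holding \emph{simultaneously for all} $x^\star\in\{0,1\}^n$ with $a\cdot x^\star\le b$, and with $\mu$ placing $\Omega_E(1)$ mass on moves that actually change $y$. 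The existence part of (Drift) is a minimax statement: in the zero-sum game in which the algorithm picks a move $(i,v)$ and an adversary picks a compatible target, the value of the normalized payoff $\E_{(i,v)\sim\mu}\E_{x^\star}[\psi(v,x^\star_i)/\psi(y_i,x^\star_i)]$ is at least $1$; by von Neumann it suffices to check that against any mixed target --- equivalently, any $p\in[0,1]^n$ with $a\cdot p\le b$, where $p_i$ is the target marginal --- some move has expected multiplicative gain $\ge 1$. This is where $a\cdot y>b\ge a\cdot p$ enters: it forces the sign of $a\cdot(y-p)$, which exhibits a coordinate $i$ that should be pushed toward $p_i$, and $\psi$ is designed precisely so that pushing that coordinate to a suitable value of $G$ has gain $\ge 1$ while displacing $y$. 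One must also argue that some flagged coordinate is ``pushable'' in this sense: coordinates whose target marginal is extreme contribute negligibly to $a\cdot(y-p)$ and can be discarded. After relaxing the adversary to the polytope $\{p\in[0,1]^n: a\cdot p\le b\}$ (which contains every integral target), this game has $\poly(n,m,|G|)$ strategies on each side, so $\mu$ is found in $\poly(n,m)$ time with $E$-dependent constants.

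Granting (Start) and (Drift), every step of a live trial keeps $\Psi_{x^\star}$ a submartingale valued in $(0,1]$, with multiplicative increments bounded in terms of $E$ and a uniformly positive chance of genuine movement. A gambler's-ruin / hitting-time estimate --- carried out as in the binomial-tail analysis of Sch\"oning's walk, after passing to the one-dimensional potential $\log\Psi_{x^\star}$ --- then shows that from $y$ the trial reaches $x^\star$, and hence outputs a feasible point, within $\poly(n,m)$ steps with probability at least $\Psi_{x^\star}(y)/\poly(n,m)$. Averaging over $y\sim\mathcal D_0$, one trial succeeds with probability at least $\E_{\mathcal D_0}[\Psi_{x^\star}]/\poly(n,m)\ge (2-\meas(E))^{-n}/\poly(n,m)$, so $N=c_E(2-\meas(E))^n\poly(n,m)$ trials succeed with high probability, where $c_E$ absorbs the $n$-independent constants coming from $|G|$, the increment bounds, and the non-laziness margin.

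I expect the main obstacle to be the joint design of $\nu$ and the weights $\psi(\cdot,b)$ so that (Start) holds \emph{with no per-coordinate loss} (needed for a single constant $c_E$ rather than $c_E^n$): (Start) pushes $\psi$ to stay close to $1$ on a $\meas(E)$-large part of $E$, whereas (Drift) constrains the \emph{shape} of $\psi$ --- morally a convexity condition on $\log\psi$ --- so that a coordinate can always be pushed toward its unknown target without decreasing $\E[\Psi_{x^\star}]$, and these requirements pull against each other, with the gaps of $E$ (which add to $2-\meas(E)$ yet are ``free'' to cross because $\psi$ is constant over them) playing a delicate role. I anticipate that $\psi$ and $\nu$ will not have a clean closed form but will emerge from an auxiliary optimization whose feasibility is itself certified by LP duality / the minimax theorem, and that making the constants land on exactly the base $2-\meas(E)$ is the quantitatively delicate step; secondary points are choosing $G$ finely enough for (Drift) without inflating the $\poly(n,m)$ step bound, and verifying non-laziness of the minimax-optimal $\mu$.
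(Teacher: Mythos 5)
Your overall architecture matches the paper's: a multiplicative per-coordinate potential relative to an unseen integral solution, a move distribution whose existence is certified by the minimax theorem and computed by an LP, a submartingale/hitting-time analysis with bounded multiplicative increments, and a product starting distribution achieving per-coordinate expected potential $1/(2-\meas(E))$ (the paper's Claim~3.6). However, there is a genuine gap in your (Drift) step, and it is exactly the issue the paper flags when it says local methods cannot suffice. You trigger a move by picking a \emph{single violated inequality} $a\cdot y\le b$ at the current point $y$ and play the minimax game against the adversary class $\{p\in[0,1]^n:\ a\cdot p\le b\}$. The only information this gives is that some coordinate has $p_i$ on the correct side of $y_i$; it does \emph{not} guarantee that $p_i$ lies outside the interval of $E$ containing $y_i$ --- it may sit in the same interval, arbitrarily close to $y_i$. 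With a potential that is (essentially) constant on the intervals of $E$ --- which is what you need both for (Start) to give the base $2-\meas(E)$ with no per-coordinate loss and for the traction (increments bounded away from $1$) --- no interval-changing move then has expected gain $\ge 1$ against such an adversary. If instead you let $\psi$ vary inside intervals so that pushing toward a nearby $p_i$ always has gain $\ge 1$, a first-order computation at an interior grid point forces the local log-slopes of $\psi(\cdot,0)$ and $\psi(\cdot,1)$ to be in the ratio $y/(1-y)$, i.e.\ forces $\psi$ to track the plain gambler's-ruin potential $(1-t,\,t)$; but then $\E_\nu[\psi(\cdot,0)]+\E_\nu[\psi(\cdot,1)]\le 1$, so both start conditions cannot simultaneously exceed $1/(2-\meas(E))$ and the base degrades toward $2$. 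So the tension you flag at the end is not merely a delicate constant-chasing issue: under your single-constraint repair rule it is unresolvable.

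The paper escapes this by changing what a ``failed step'' is. Its state is the vector of chosen intervals $\sigma\in[k]^n$; at each step it first solves an auxiliary LP to test whether \emph{any} point of the box $\prod_i[c_{\sigma_i},d_{\sigma_i}]$ satisfies the \emph{entire} system $Ax\le b$, and only moves when the whole box is disjoint from the feasible polytope $K_1$. The minimax game is then played against all of $K_1$, and disjointness guarantees that every adversary point has some coordinate strictly beyond an endpoint of its current interval --- precisely the condition under which the adjacent-interval move, with the interval-constant potential defined by the ratios in (\ref{eq:0})--(\ref{eq:1}), has expected gain $\ge 1$ (Claim~3.2), while every move changes an interval and hence moves the potential by at least the traction $\tau(E)$. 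Replacing your single-constraint trigger by this global box-feasibility test (and computing the move distribution via the separation-oracle/ellipsoid argument over $K_1$, rather than claiming polynomially many vertices for $\{p: a\cdot p\le b\}$, which in fact has exponentially many) is the missing ingredient; with it, the rest of your outline goes through as in the paper.
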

Fix a constant $k$ and consider $E = E_k = [0,1/k) \cup (1-1/k,1]$. For this $E$, we get a $(2-2/k)^{n} \poly(n,m)$ time algorithm. Solving the natural LP associated\footnote{For each variable $x_i$ in the $k$-SAT instance, specify that $0 \le x_i \le 1$. For each clause $x_{i_1} \wedge x_{i_2} \wedge \cdots x_{i_k}$ specify that $x_{i_1} + x_{i_2} + \cdots + x_{i_k} \ge 1$. For any variable that's negated in a clause, replace $x_{i_j}$ with $1 - x_{i_j}$.} with $k$-SAT in this
way and rounding variables in $[0,1/k)$ to $0$ and those in
$(1-1/k,1]$ to $1$, we get a $k$-SAT algorithm that matches the
runtime of Sch\"{o}ning's celebrated algorithm~\cite{DBLP:conf/focs/Schoning99}, although better algorithms for $k$-SAT are known (e.g., \cite{PPSZ}). The formulation of the problem of
finding a relaxed solution in $E^n$ is, in our opinion, a valuable
contribution in itself, and it further enables a unified treatment of algorithms for $k$-SAT as well as a more general class of constraint satisfaction problems (CSPs) (to be
described shortly).

The algorithm claimed in Theorem~\ref{thm:intro-main} is based on a
random walk approach inspired by Sch\"{o}ning's algorithm.

Even for the case of $E_k$, many new ideas are needed to bring the algorithm to fruition. In particular we use a novel global criterion which guides for making local changes to an evolving assignment. See Section~\ref{subsec:proofoverview} for an overview of our approach.

We note that the work of Impagliazzo, Lovett, Paturi and Schneider~\cite{impagliazzo20140} shows that integer programming can be solved in $2^{(1-\Omega(1/c))n}$ time if the number of constraints is $cn$. The guarantee is incomparable to ours: on the one hand, we allow for many more (even sub-exponentially many) constraints, but on the other hand, we do not find a $0$-$1$ valued solution.

\subsection{Overview of algorithm and analysis}\label{subsec:proofoverview}

Sch\"{o}ning's algorithm for $k$-SAT~\cite{DBLP:conf/focs/Schoning99} consists of the following simple procedure. First, one starts with uniformly random assignment $x \sim \{0, 1\}^n$ to the variables. Then, for $O(n)$ steps one checks if the instance is satisfied (in which case terminate) or else one picks any violated clause and picks a uniformly random variable to flip in that clause.  Sch\"{o}ning's analysis  used the fact that during each random step, the current assignment's Hamming distance to a  ``reference'' solution decreases by $1$ with probability at least $\frac{1}{k}$ and otherwise increases by at most $1$. A simple combinatorial argument can then show that the probability of success is at least $\Omega(\left(\frac{k}{2k-2}\right)^n)$, yielding the $O\left(\left(2 - \frac{2}{k}\right)^n\right)$-time randomized algorithm.

Our algorithm for bridging integer and linear programming has Sch\"{o}ning's algorithm as a blueprint, but many additional insights were needed at both the conceptual and technical levels. A primary challenge is that, one needs to resolve that the variables for our integer/linear program can take on a continuum of values compared to only $2$ (or a finite number) in Sch\"{o}ning's algorithm. To get around this, we use the fact that our target domain $E$ is expressed as the union of finitely many intervals, and thus we consider the discrete problem of determining which interval each variable belongs to. Since the Cartesian product of intervals is a convex region, when one has a set of intervals selected, one can check if there exists some assignment in those intervals by solving a simple modified linear program.

Even with this reduction to a discrete problem, the ``naive'' extension of Sch\"{o}ning's algorithm and analysis still runs into many issues. For instance, the part of Sch\"{o}ning's random walk where he finds a ``bad'' clause and then flips a uniformly random variable needs to be modified greatly. In particular, it is not clear what a ``bad clause'' with respect to our linear program $Ax \le b$, as a particular choice of intervals could be invalid, even though no individual inequality of $Ax \le b$ rules out the entire region. Note that this shows that local methods cannot suffice for this problem (unlike Sch\"{o}ning's algorithm). Thus, we need to use the entire problem $Ax \le b$ as our failed constraint. Furthermore, when modifying the choice of intervals for the next step of the random walk as we no longer have sparsity, it is not clear that uniformly changing some variable will help at all.

To get around these issues, we employ a potential function $U_x(\sigma)$, where $x \in \{0, 1\}^n$ represents an integral solution to $Ax \le b$ while $\sigma$ represents the current choice of intervals in our random walk. The potential function is multiplicative $U_x(\sigma)$ = $\prod_{i=1}^n U_{x_i}(\sigma_i)$, where $U_{x_i}(\sigma_i)$ takes on the value $1$ if $x_i$ is in the interval $\sigma_i$ and decreases depending on how far $x_i$ is from the interval $\sigma_i$. This choice of potential function was inspired by the analysis of the biased random walks corresponding to the ``Gambler's Ruin'' problem (e.g., \cite{feller1968introduction, ethier2002bounds, lawler2010random}). The choice of the potential needs to be precisely tailored to the set $E$, see (\ref{eq:0}, \ref{eq:1}), but intuitively it captures the probability of success of the random walk from the position $\sigma$. Note that since the whole point is to determine the feasibility of the integer program, we cannot explicitly compute $U_x(\sigma)$. Instead, we simultaneously  optimize $U_x(\sigma)$ over all fractional solutions $x \in [0, 1]^n$ to $Ax \le b$.

This optimization involves thinking of the random walk as a two-player game involving Alice and Bob. Imagine that Alice has some adversarial distribution of integral solutions $x \in \{0, 1\}^n$ to $Ax \le b$. Bob needs to compute a distribution of states $\sigma'$ which are changes to $\sigma$ in one position such that $\mathbb E[U_{x}(\sigma')] \ge U_{x}(\sigma)$. By minimax, it suffices to show for any such distribution of strategies, Bob has a deterministic response which causes the potential to increase or stay the same, which results in a simple calculation. This proves that Bob has a random strategy to keep the expected value of the potential as a monovariant no matter what solution Alice is thinking about. We show that this strategy can be explicitly computed by solving a convex optimization problem with a separation oracle.

By itself, knowing that the potential $U_{x}(\sigma)$ is non-decreasing is not too helpful. The key observation is that $U_{x}(\sigma)$ is not staying constant during the random walk, sort of like how in Sch\"{o}ning's algorithm there is at least a $1/k$ chance of making progress. In fact $U_{x}(\sigma')/U_x(\sigma) \not\in (1/\tau, \tau)$ for some $\tau > 1$ we call the \emph{traction}. This ``traction'' is all we need to guarantee that the random walk, originating from state $\sigma$, will succeed with probability at least $\Omega(U_{x}(\sigma))$ in polynomial time.

Finally, one needs to be careful with picking an appropriate starting distribution. A judicious choice is made to ensure a stating potential of at least $\approx (2-\meas(E))^{-n}$ which then leads to the $(2-\meas(E)^n) \poly(n)$ running time. See Section~\ref{subsec:proof} for more details.

Using this linear programming paradigm, we can then obtain a variety of exponential time algorithms for many CSPs by (1) rewriting the CSP as an integer program, (2) picking a set $E \subset [0, 1]$ corresponding to the CSP based on its partial polymorphisms, (3) solving the linear program corresponding to the set $E$ using our algorithm, (4) ``round'' the linear program solution back to a solution to the CSP using partial polymorphism. See Section~\ref{sec:PCSP} for more details.

In summary, by overcoming a few conceptual and technical hurdles, we were able to  generalize Sch\"{o}ning's algorithm to solve a much wider range of feasibility problems.

\subsection{Connection to constraint satisfaction}

We now describe our original motivation, relating to fast exponential algorithms for constraint satisfaction problems (CSPs), that led us toward Theorem~\ref{thm:intro-main}. (In hindsight, we view our main result to be of intrinsic interest from an optimization perspective, well beyond the intended application to CSPs.) In the algebraic theory of constraint satisfaction, the tractability of a CSP has been shown to be intimately tied to the \emph{polymorphisms} associated with its defining relations~\cite{Bulatov2005}. A polymorphism for a relation $R$ is a function $f$ that preserves membership in $R$. That is, if $R \subset D^k$, a function $f: D^m \to D$ is a polymorphism if applying $f$ component-wise to an arbitrary sequence of $m$ $k$-tuples from $R$ always leads to a $k$-tuple in $R$. The resulting rich theory can not only explain and predict polynomial time decidability of CSPs, but with appropriate variations of polymorphisms, also explain the complexity of counting, optimization, and promise versions (c.f., the surveys \cite{Chen2009, DBLP:conf/dagstuhl/BartoKW17}).

In fact, rather remarkably the framework can also shed light on the complexity of CSPs that are NP-complete. \emph{Partial} polymorphisms, which are partial functions that preserve membership in the CSP relations whenever they are defined, can explain the runtime of fast exponential algorithms for CSPs, i.e., the smallest\footnote{Assuming the ETH, one cannot have a $(1+\epsilon)^n$ time algorithm for arbitrarily small $\epsilon > 0$.}  $c > 1$ for which one can get a $c^{n+o(n)}$ time algorithm~\cite{DBLP:conf/soda/LagerkvistJNZ13}. In particular, they show that 1-in-3-SAT, a variant of 3-SAT in which a clause is satisfied only if \emph{exactly} one of the variables is true, is essentially the easiest NP-hard problem in terms of the efficiency of an exponential algorithm.  

In a recent paper~\cite{BrakensiekGuruswami2019}, we presented a general framework to deduce polynomial time algorithms for CSPs (in fact, the more general promise CSPs) that admit as polymorphisms a family of \emph{threshold} functions. An example of such a function $f : \{0, 1\}^m \to \{0, 1\}$ satisfies $f(x) = 1$ if and only if the Hamming weight of $x$ is at least some parameter $\ell$. More generally, such functions can take on more than two values and can have multiple phase transitions at different hamming weights. The Promise CSP algorithm was based on rounding the solution to an associated linear program using the polymorphism.  This framework can be applied with our fast exponential algorithm for finding an LP solution in $E^n$ in order to obtain fast exponential algorithms for CSPs that have a sequence of threshold functions as \emph{partial} polymorphisms. Here  $E$ represents the
  interval of input (fractional) Hamming weights on which the polymorphism is
  defined. (In the case when polymorphisms are defined everywhere,
  $E=[0,1]$ so one can just solve the LP efficiently.) The details of this connection appear in Section~\ref{sec:PCSP}.

    The CSPs for which our method leads to fast exponential algorithms, such as $3$-SAT or $1$-in-$3$-SAT (which has an $O(1.0984^n)$ algorithm~\cite{wahlstrom2007algorithms}), often admit even faster exponential time algorithms optimized for the specific CSP. Our LP solver identifies a fairly general sufficient condition that implies fast exponential algorithms for certain CSPs, giving a principled reason for their existence even if the exact runtime is not optimized.

  Very recently, \cite{DBLP:journals/corr/abs-1801-09488} constructed the first Boolean CSP which has a quantiative lower bound (assuming SETH) and has a nontrivial (e.g., not $2^{n-o(n)}$) upper bound. The constraints of the CSP do not have a finite description but are rather all possible constraints which have a prescribed partial polymorphism known as a ``2-edge operator.'' Their upper bound algorithm involves a ``meet in the middle strategy'' to get a $2^{n/2}$ runtime. A related problem they explore (corresponding to ``$k$-near-unanimity'' operators) gives an nontrivial upper bound involving a variant of Sch{\"o}ning's algorithm, although the correctness of the algorithm is conditioned on the sunflower conjecture.

  These methods also connect to the $(2+\epsilon)$-SAT problem of \cite{DBLP:journals/siamcomp/AustrinGH17}. Succinctly, the $(2+\epsilon)$-SAT problem, for $\epsilon = 1/k$, is the following: given an instance of $(2k+1)$-SAT in which there exists an assignment which satisfies at least $k$ literals in every clause, find a solution to the ordinary $k$-SAT instance.  For $E = [0, \frac{k}{2k+1}) \cup (\frac{k+1}{2k+1}, 1]$, there is a reduction of $(2+1/k)$-SAT to finding a solution belonging to $E^n$ for the basic LP of the $(2k+1)$-SAT instance. Note that this implies the existence of a $\left(1 + \frac{1}{2k+1}\right)^n\poly(n)$ time algorithm for $(2+1/k)$-SAT. Since $(2+\epsilon)$-SAT was shown to be NP-hard for every $\epsilon > 0$~\cite{DBLP:journals/siamcomp/AustrinGH17}, this reduction also shows that our LP problem is NP-hard for $E = [0, \alpha] \cup [1-\alpha, 1]$ for each fixed $\alpha < \frac{1}{2}$.

\subsection{The road ahead}
This algorithm for linear programming spurs many questions for further investigation. The following are a sample of potential directions of exploration.

\begin{itemize}
\item For succinctness, this article only describes testing feasibility of a linear programs. One may also consider the optimization version of the question where one seeks to maximize $c^Tx$ subject to $Ax \le b$ and $x \in \{0, 1\}^n$. Assume that $M$ is the optimal value. One way to phrase such an inquiry is to desire to find $x \in E^n$ such that $Ax \le b$ and $c^Tx \ge M$. This can be achieved with essentially the same complexity by performing binary search on candidate values $M'$ and adding $c^Tx \ge M'$ to the linear program. 

\item Although the focus of this paper is optimizing in powers of subsets $E \subset [0, 1]$, one can also investigate product sets $E_1 \times E_2 \times \cdots \times E_n$. As the relevant potentials are multiplicative, the analysis of this paper can generalize in a straightforward manner, as long as we have a uniform bound on the traction $\tau(E_i)$ (see \ref{eq:traction}).

  Another extension would be product sets $E_1 \times \cdots \times E_n$, where each $E_i \subset [0, 1]^k$ for some constant $k$. Such reductions are relevant in reducing from non-Boolean CSPs.

\item Also for succinctness, we restrict the exposition to having the bridging between $\{0, 1\}^n$ and $[0, 1]^n$ to be a product set $E^n$, but our method seems to have the capability, with several additional technical ideas, to generalize to more complex regions $R \subseteq [0, 1]^n$. An interesting example is
  \[
    R = \{x \in [0, 1]^n \mid \exists y \in \{0, 1\}^n , \ \|x - y\| \le \epsilon\},
  \]
  where $\epsilon > 0$ and $\|\cdot\|$ is any norm on $[0, 1]^n$. Note that $R = E^n$ with $E = [0, \epsilon] \cup [1-\epsilon, 1]$ covers the case $\|\cdot\|$ is the $L^{\infty}$ norm.
  
\item Another important question is finding a derandomization of the random walk algorithm. Moser and Scheder~\cite{MoserScheder2011} did successfully derandomize Sch\"{o}ning's algorithm, so we envision that similar methods should be able to derandomize our algorithm.

\item This article shows how this LP algorithm can give exponential time random walk algorithms for a variety of Boolean CSPs and Promise CSPs where the ``promise domain'' is Boolean. It is possible to extend to higher domains by noting that any CSP can be expressed a $\{0,1\}$-integer program by having indicator variables $x_{i,j}$ which represent if the $i$th variable in the CSP is equal to $j$.
\end{itemize}

\label{subsec:org}
\medskip\noindent \textbf{Paper Organization.} In Section~\ref{sec:IP}, we formally state the main result as well as sketch the algorithm. In Section~\ref{sec:analysis}, we prove that this algorithm is correct and has the claimed run-time. In Section~\ref{sec:PCSP}, we prove some applications of the main result, including recovery of Sch\"{o}ning's random walk algorithm. In Appendix~\ref{app:omit}, we include the proofs (mainly some calculations) omitted in the body of the paper. 

\section{Random walk algorithm}\label{sec:IP}

Consider any linear program $Ax \le b$, where $A \in \mathbb Q^{m \times n}, b \in \mathbb Q^n$. Assume that $m$ is bounded by a subexponential function of $n$.  Treat this as a $0$-$1$ integer program, so we desire $x \in \{0, 1\}^n$. Unless one refutes the Strong Exponential Time Hypothesis (SETH), determining such an $x$ requires at least $2^{(1-o(1))n}$ time. That said, we show in this section, that one can get significantly better runtimes if one accepts an approximate solution. We now define what we mean by ``approximate.''

Let $E \subset [0, 1]$ be a closed subset with $\{0, 1\} \subset E$ which is the union of disjoint intervals of nonzero length. As stated below, these intervals of $E$ are explicitly given as part of the problem statement. The set $E$ tracks which errors are allowed. That is, we define an approximate solution to be any $x \in E^n$ which satisfies the linear program. This leads to the following theorem.

\begin{thm}\label{thm:IP}
  Let $Ax \le b$ be a linear program with $A \in \mathbb Q^{m \times n}, b \in \mathbb Q^n$. Let $E = [c_1, d_1] \cup [c_2, d_2] \cup \cdots \cup [c_k, d_k]$ be a sequence of intervals with rational endpoints with
  \[
    0 = c_1 < d_1 < c_2 < d_2 < \cdots < c_k < d_k = 1.
  \]
  Assume the promise that there exists $x \in \{0, 1\}^n$ which satisfies the linear program. There exists a randomized algorithm computing a solution $x \in E^n$ to the linear program in time
  \[
    c_E(2 - \meas(E))^{n}\poly(n+m),
  \]
  where $\meas(E)$ is the sum of the lengths of the intervals of $E$, and $c_E$ depends only on $E$.
\end{thm}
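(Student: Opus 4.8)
The plan is to follow the Schöning-style random walk blueprint sketched in Section~\ref{subsec:proofoverview}, but on the discrete space of \emph{interval assignments} $\sigma \in \{1,\dots,k\}^n$ (where $\sigma_i$ records which interval of $E$ variable $i$ is placed in), using a carefully tailored multiplicative potential to track progress toward a hidden $0$-$1$ solution $x^*$. First I would set up the reduction: for a given $\sigma$, one can test in polynomial time (via an auxiliary LP) whether there exists $y$ with $Ay \le b$ and $y_i \in [c_{\sigma_i}, d_{\sigma_i}]$ for all $i$; if so, any such $y$ lies in $E^n$ and we are done. So the algorithm maintains a current $\sigma$, checks feasibility, and if infeasible takes a random step that changes $\sigma$ in exactly one coordinate. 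The crux is to choose the step distribution so that an appropriately defined potential $U_{x^*}(\sigma) = \prod_i U_{x^*_i}(\sigma_i)$ — designed so that $U_{x_i}(\sigma_i) = 1$ when $x_i \in [c_{\sigma_i},d_{\sigma_i}]$ and decays geometrically (Gambler's-Ruin style, tuned to the gap structure of $E$ via equations~(\ref{eq:0},~\ref{eq:1})) as $\sigma_i$ moves away from the interval containing $x_i$ — is a submartingale: $\E[U_{x^*}(\sigma')] \ge U_{x^*}(\sigma)$.

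The key step is establishing this submartingale property \emph{without knowing $x^*$}. Here I would invoke the minimax/two-player-game argument: think of Alice as choosing an adversarial distribution $\mu$ over genuine $0$-$1$ solutions $x$ of $Ax \le b$, and Bob as choosing a distribution over one-coordinate modifications $\sigma'$ of $\sigma$; Bob wants $\E_{\sigma'}[U_x(\sigma')] \ge U_x(\sigma)$ for every $x$ in Alice's support, equivalently $\E_{\sigma'}\E_{x \sim \mu}[U_x(\sigma')] \ge \E_{x\sim\mu}[U_x(\sigma)]$. By LP duality it suffices to show that for every fixed distribution $\mu$ Bob has a \emph{deterministic} good response, which reduces to a local calculation exploiting the multiplicative structure of $U$ and the feasibility of $\E_{x\sim\mu}[x]$ for $Ax \le b$. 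One then argues Bob's mixed strategy is computable by a convex program with a separation oracle (the oracle being the auxiliary feasibility LP together with evaluation of the potential ratios). This is the step I expect to be the main obstacle: making the game/LP-duality argument fully rigorous, ensuring the relevant optimum is attained and that a separation oracle of the required accuracy can be implemented in polynomial time, and checking that the potential of equations~(\ref{eq:0},~\ref{eq:1}) is exactly the one that makes Bob's local inequality go through for \emph{all} possible interval configurations.

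Granting the submartingale property, the rest follows the Schöning template. I would show the walk has \emph{traction}: each step changes the potential multiplicatively by a factor outside $(1/\tau,\tau)$ for some $\tau = \tau(E) > 1$ (see~(\ref{eq:traction})), because a one-coordinate change to $\sigma$ necessarily moves the corresponding factor $U_{x^*_i}(\sigma_i)$ by a bounded-away-from-$1$ ratio. Combined with the submartingale bound and the fact that $U \le 1$ always, a standard argument (optional stopping, or a direct potential/gambler's-ruin estimate) shows that within $\poly(n)$ steps the walk reaches a feasible $\sigma$ with probability $\Omega(U_{x^*}(\sigma_{\mathrm{start}}))$. Finally I would choose the starting distribution over $\sigma$ judiciously — placing coordinate $i$ in interval $j$ with probability proportional to that interval's length raised to an appropriate power, so that $\E[U_{x^*}(\sigma_{\mathrm{start}})] \ge c_E^{-1}(2-\meas(E))^{-n}$ regardless of $x^*$ — and conclude that $O(c_E (2-\meas(E))^n \poly(n))$ independent restarts succeed with high probability; each restart costs $\poly(n,m)$ for the feasibility LPs and the convex program defining Bob's step, giving the claimed runtime $c_E(2-\meas(E))^n \poly(n+m)$.
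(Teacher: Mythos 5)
Your blueprint is, in outline, the paper's own proof: a walk over interval assignments $\sigma\in[k]^n$, the multiplicative potential of (\ref{eq:0})--(\ref{eq:1}), a minimax argument reducing Bob's mixed one-coordinate step to a deterministic response against a point of the fractional polytope $\{x\in[0,1]^n: Ax\le b\}$ (Claim~\ref{claim:potential}), a separation-oracle convex program to compute the step distribution, and traction plus independent restarts (Theorem~\ref{thm:random-walk2}). Two steps, however, are not mere polish and as written would fail or stall. The first is the starting distribution. You need a \emph{per-coordinate} guarantee $\min\bigl(\E_{j\sim q}[U_0(j)],\,\E_{j\sim q}[U_1(j)]\bigr)\ge \frac{1}{2-\meas(E)}$, since Alice's bit in each coordinate is unknown and the potential factorizes; any constant per-coordinate shortfall compounds to an exponentially worse base, so a single global slack factor $c_E^{-1}$ cannot absorb it. Your concrete rule (probability proportional to a power of the interval length) does not achieve this: for $E=[0,1/5]\cup[2/5,3/5]\cup[4/5,1]$ all three intervals have equal length, so every power gives the uniform distribution, for which $\E[U_0]=\E[U_1]=\tfrac{1}{3}(1+\tfrac34+\tfrac38)=\tfrac{17}{24}<\tfrac{5}{7}=\frac{1}{2-\meas(E)}$, costing a factor $(120/119)^n$ in the runtime. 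The paper instead samples $j$ with probability proportional to $c_{j+1}-d_{j-1}$ (interval length \emph{plus} both adjacent gaps), equivalently an optimizer of the small LP (\ref{eq:beta}), and the bound $\beta(E)\ge\frac{1}{2-\meas(E)}$ is a genuine telescoping computation (Claim~\ref{claim:beta}), not something delivered by ``judicious choice'' alone.

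The second gap is the rate. The submartingale property together with traction does not, via optional stopping, yield ``within $\poly(n)$ steps with probability $\Omega(U_{x^*}(\sigma^1))$'': optional stopping controls the eventual hitting probability, not the time, and the expectation can be carried indefinitely by non-winning trajectories. The paper's device (Lemma~\ref{lem:random}) is to pass to $Y_t=X_t^{1+1/n}$; convexity of $z\mapsto z^{1+\epsilon}$ plus the traction bound $X_{t+1}/X_t\notin(1/\tau,\tau)$ makes $Y_t$ a \emph{strict} submartingale gaining a factor $1+\delta$ with $\delta=\frac{1}{2n}\bigl(1-\frac{1}{\tau(E)}\bigr)^2$ on every non-winning step (Claim~\ref{claim:submart}, Proposition~\ref{prop:calc}), which is exactly what produces the explicit polynomial cutoff $T_{n,E}$ and the success probability $\tfrac12 U_x(\sigma^1)^{1+1/n}$ (this is $\Omega_E(U_x(\sigma^1))$ because $U_x(\sigma^1)\ge\gamma(E)^n$). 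A quantitative argument of this kind --- whether this convexity trick or a careful gambler's-ruin estimate on $\log X_t$ --- has to be supplied; it is not a black-box consequence of ``submartingale plus traction.'' With those two pieces (the correct $q$ with Claim~\ref{claim:beta}, and the rate lemma), your outline matches the paper's proof of Theorem~\ref{thm:IP}.
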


At a high level, think of the problem as a two-player game. Alice has in mind some solution $x \in \{0, 1\}^n$ to $Ax \le b$ while Bob is trying to find $y \in E^n$ such that $Ay \le b$ and $y$ is as similar to $x$ as possible. To keep track of this similarity, we have a ``score'' (potential function) for the game which states how similar Bob's $y$ is to $x$. At first Bob's guess, say $y^{(0)}$ does not satisfy $Ay^{(0)} \le b$, but over time Bob can use the score as a heuristic to guide a random walk so that he gets close enough in similarity to $x$ to solve $Ay \le b$.

\subsection{Potential Function}\label{subsec:IP-potential}

Consider a game in which Alice has in mind $a \in \{0, 1\}$ and Bob guesses an interval $[c_b, d_b]$ in $E = [c_1, d_1] \cup \cdots \cup [c_k, d_k]$ with $c_1 = 0$ and $d_k = 1$. We give Bob a score $U_a : [k] \to (0, 1]$ satisfying the following rules
\begin{itemize}
\item Correct guesses have full potential $U_0(1) = 1$ and $U_1(k) = 1$.
\item The potential changes monotonically, for any $i \in [k-1]$ we have
  \begin{align*}
    \frac{U_{0}(i+1)}{U_0(i)} &= \frac{1-c_{i+1}}{1 - d_i}\\
    \frac{U_{1}(i)}{U_{1}(i+1)} &= \frac{d_i}{c_{i+1}}
  \end{align*}
\end{itemize}

The precise function which satisfies this condition is as follows:
\begin{align}
  U_0^E(i) &=
    \prod_{j=1}^{i-1} \frac{1-c_{j+1}}{1-d_{j}} \label{eq:0}\\
  U_1^E(b) &= 
    \prod_{j=1}^{k - i} \frac{d_{k+1-j}}{c_{k-j}} \label{eq:1},
\end{align}
where the empty product is equal to $1$.

The potential functions are chosen so that during each step of the random walk, the expected value of the potential stays the same over time. The quantitative justification is in Claim~\ref{claim:potential}.

Now, in the actual game, Alice has in mind an $n$-bit vector $x \in \{0, 1\}^n$ (the 0-1 integer program solution) and Bob has $n$ interval guesses $y \in [k]^n$. The potential in this case is just the product of the coordinate-wise potentials.

\[
  U_x(y) := \prod_{i=1}^n U_{x_i}(y_i).
\]

Note that we still have that $U_x(y) \in (0, 1]$ for all $x \in \{0, 1\}^n$ and $y \in [k]^n$.

In order to make quantitative guarantees about the runtime of our algorithm, we need some parameters which quantify how ``well-conditioned'' our potential function is. Define the \emph{quanta} of $E$ to be
\[
  \gamma(E) = \min (\{U_0(1), U_1(k)\}.
\]
In other words, $\gamma(E)$ captures how small the potential of one coordinate can be.

Define the \emph{traction} $\tau(E)$ of $E$ to be
\begin{align}
  \tau(E) := \min_{\substack{a \in \{0, 1\}\\i,j \in [k], i \neq j\\U_a(i)\ge U_a(j) \neq 0}} \frac{U_a(i)}{U_a(j)}.\label{eq:traction}
\end{align}

Roughly speaking, $\tau(E)$ measures the minimum amount the potential can go up by when it increases. The latter is quite important in order to ensure that our potential does not get ``stuck'' by increasing only a negligible amount on each step.

\subsection{Sketch of Random Walk Algorithm}

With the potential function defined, we can now describe the random walk algorithm at a high level.

  \begin{framed}
  	
  	\begin{center}
  		\textbf{Algorithm \thesection.1:} A randomized algorithm for solving approximate $\{0, 1\}$ integer programs.
  	\end{center}
  
    \textbf{Input:} $Ax \le b$, $A \in \mathbb Q^{m \times n}, b \in \mathbb Q^m$. $E = [c_1, d_1] \cup \cdots \cup [c_k, d_k]$, $0 = c_1 < d_1 < c_2 < d_2 < \cdots < c_k < d_k = 1, c_i, d_i \in \mathbb Q$.
    
    The algorithm uses a parameter $\epsilon > 0$ governing the number of iterations.
    \begin{itemize}
    \item[1.] For each $i \in \{1, \hdots, n\}$. Sample $\sigma^1_i \in \{1, \hdots, k\}$ according to the distribution.
      \begin{align*}
        \Pr[\sigma^1_i = j] = \frac{c_{j+1} - d_{j-1}}{2 - \meas(E)}.
      \end{align*}
      where $d_0 = 0$ and $c_{k+1} = 1$.
    \item[2.] For $t$ in $\left\{1, \hdots, T_{n,E}\right\}$
      \begin{itemize}
      \item[3.] Check if there exists $y \in \prod_{i=1}^n [c_{\sigma^t_i}, d_{\sigma^t_i}]$ such that $Ay \le b$ If so, \textbf{Output $y$}.
      \item[4.] Find $p_{i,j}, i \in [n], j \in [k]$, the probability that $\sigma^t_i$ changes to $j$, by solving the following feasibility problem.
        \begin{align}
          \sum_{i=1}^n \sum_{j=1}^k p_{ij} &= 1, p_{i,j} \ge 0\nonumber\\
          \forall i \in \{1, \hdots, n\}, p_{i\sigma^t_i} &= 0\nonumber\\
          \forall x \in [0, 1]^n \text{ such that }Ax \le b, &\sum_{i=1}^n \sum_{j=1}^k p_{ij} \left[(1 - x_i)\frac{U_0(j)}{U_0(\sigma^t_i)} + x_i\frac{U_1(j)}{U_1(\sigma^t_i)}\right] \ge 1 \label{eq:key}
        \end{align}
      \item[5.] Sample $(i, j) \sim [n] \times [k]$ according to the probability $p_{ij}$. Set $\sigma^{t+1} = \sigma^t$ except $\sigma^{t+1}_i = j$.
      \end{itemize}
    \item[6.] \textbf{Output} ``Fail.'' 
    \end{itemize}
  
\end{framed}

We claim the following

\begin{thm}\label{thm:random-walk2}
  Assume that there exist $x \in \{0, 1\}^n$ such that $Ax \le b$. Then, if Bob starts his random walk at $\sigma \in \{1, \hdots, k\}^n$, he will find a solution to $Ay \le b$ with $y \in E^n$ with probability at least
  \[
    \frac{1}{2}U_x(\sigma)^{1+1/n}
  \]
  as long as
  \[
    T_{n, E} = \left\lceil\dfrac{(n+1)\log(1/\gamma(E))}{\log\left[1 + \frac{1}{2n}\left(1 - \frac{1}{\tau(E)}\right)^2\right]} + 2\right\rceil \ .
  \]
  Furthermore, computing the steps of this random walk can be done in polynomial time.
\end{thm}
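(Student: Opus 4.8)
The statement has two essentially independent halves: (a) that Step~4 is always solvable and that a single iteration of the walk can be carried out in $\poly(n,m)$ time, and (b) the claimed lower bound $\tfrac12 U_x(\sigma)^{1+1/n}$ on the success probability after $T_{n,E}$ iterations. I would treat them in that order. For (a), Steps~1 and~3 are trivial (sampling from an explicit distribution on $[k]$, and testing feasibility of the LP $Ay\le b$, $y\in\prod_i[c_{\sigma_i},d_{\sigma_i}]$); the only real point is Step~4, a feasibility LP in the $nk$ variables $p_{ij}$ with the simplex constraint, the sparsity constraints $p_{i\sigma_i}=0$, and the infinite family \eqref{eq:key} indexed by $x$ in the polytope $P=\{x\in[0,1]^n:Ax\le b\}$. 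Consistency of \eqref{eq:key} is exactly Claim~\ref{claim:potential}: since the simplex of $p$'s and $P$ are convex and compact and the payoff is bilinear, the minimax theorem says \eqref{eq:key} is satisfiable iff for every $x\in P$ some coordinate change $(i,j)$ with $j\neq\sigma_i$ has $(1-x_i)\frac{U_0(j)}{U_0(\sigma_i)}+x_i\frac{U_1(j)}{U_1(\sigma_i)}\ge 1$, which that claim supplies. To compute a solution I would run the ellipsoid method with the separation oracle obtained by noting that the left-hand side of \eqref{eq:key} is affine in $x$, so $\min_{x\in P}$ of it is itself an LP; if the minimum is $\ge 1$ then $p$ is feasible, otherwise the minimizer is a violated constraint. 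All the $U_a(j)$ are products of at most $k$ of the given rationals, so the whole computation has polynomial bit complexity, and there are only $T_{n,E}=\poly(n)$ iterations (with constants depending on $E$), giving the ``furthermore'' clause.

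For (b), fix an integral solution $x$ with $Ax\le b$, let $\mathcal F_t$ be the filtration generated by $\sigma^1,\dots,\sigma^t$, and set $M_t:=U_x(\sigma^t)$, so $M_1=U_x(\sigma)$ and $M_t\in[\gamma(E)^n,1]$ throughout. Two facts drive the analysis. First, $M_t$ is a submartingale: plugging the fractional point $x$ into \eqref{eq:key} and using $x_i\in\{0,1\}$, the bracket equals $U_{x_i}(j)/U_{x_i}(\sigma^t_i)$, and since Step~5 sets $\sigma^{t+1}_i=j$ with probability $p_{ij}$ we get $\E[M_{t+1}/M_t\mid\mathcal F_t]=\sum_{i,j}p_{ij}\,U_{x_i}(j)/U_{x_i}(\sigma^t_i)\ge 1$. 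Second, because $p_{i\sigma^t_i}=0$, one single factor changes and $M_{t+1}/M_t=U_{x_i}(j)/U_{x_i}(\sigma^t_i)$ with $j\neq\sigma^t_i$, so by the definition of traction $M_{t+1}/M_t\notin(1/\tau(E),\tau(E))$. I would also record that the walk cannot have $M_t$ close to $1$ before succeeding: if any coordinate $i$ has $x_i$ outside the interval $\sigma^t_i$ then that factor is at most $\max(U_0(2),U_1(k-1))\le 1/\tau(E)$ (the last inequality being immediate from the definition of $\tau(E)$), so $M_t<1$ already forces $M_t\le 1/\tau(E)$.

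The core step is to convexify. Let $\alpha=1+1/n$ and $\Phi_t:=M_t^{\alpha}$. Using $\E[M_{t+1}/M_t\mid\mathcal F_t]\ge 1$, the fact that the ratio avoids $(1/\tau,\tau)$, and the cap $M_t\le 1/\tau$ before success, a short convexity/Taylor estimate on $z\mapsto z^{\alpha}$ gives, on the event that the walk has not yet succeeded, $\E[\Phi_{t+1}\mid\mathcal F_t]\ge(1+c)\Phi_t$ with $c=\tfrac1{2n}(1-1/\tau(E))^2$, and one checks $(1+c)\Phi_t\le 1$ so that this is consistent with $\Phi\le 1$. Let $\theta$ be the first time $M_t=1$; it is at most the algorithm's halting time, since at that step Step~3 outputs a point of the box, which lies in $E^n$. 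Then $\Phi_{t\wedge\theta}/(1+c)^{(t\wedge\theta)-1}$ is a genuine submartingale, so optional stopping and splitting on $\{\theta\le T\}$ versus $\{\theta>T\}$ (where the corrected process is $\le 1$, respectively $\le(1+c)^{-(T-1)}$) yield
\[
\Pr[\theta\le T]\ \ge\ M_1^{\,1+1/n}-(1+c)^{-(T-1)}.
\]
Since $M_1\ge\gamma(E)^n$, a direct calculation shows the stated $T=T_{n,E}$ makes $(1+c)^{-(T-1)}\le\tfrac12 M_1^{1+1/n}$, giving the claim.

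The main obstacle is precisely the convexity estimate in the third paragraph: turning the bare submartingale-plus-traction facts into a clean per-step multiplicative growth $(1+c)$ for $M_t^{1+1/n}$, with $c$ both strictly positive (so that truncating at $T_{n,E}$ steps costs only a constant factor) and large enough to keep $T_{n,E}$ polynomial. The worst case is a two-point ratio distribution supported on $\{1/\tau,\tau\}$ which makes $M_t$ an exact martingale, so the progress comes entirely from the convexity of $z\mapsto z^{1+1/n}$ and not from any drift of $M_t$ itself; getting the right dependence on $\tau$ (namely $(1-1/\tau)^2$) and the right exponent ($1+1/n$, balancing the loss $M_1^{1/n}$ against the iteration count) is the one computation I would do most carefully.
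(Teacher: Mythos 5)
Your proposal reproduces the paper's architecture almost exactly. Part (a) is the paper's argument verbatim: minimax plus Claim~\ref{claim:potential} for existence of $p$, and the ellipsoid method with the inner minimization over $K_1$ (an LP) as the separation oracle. The core of part (b) — the submartingale property of $M_t=U_x(\sigma^t)$ read off from \eqref{eq:key} at the integral point $x$, the traction bound coming from the single-coordinate update, and the convexification $M_t\mapsto M_t^{1+1/n}$ with per-step growth factor $1+\frac{1}{2n}\left(1-\frac{1}{\tau(E)}\right)^2$ — is precisely Lemma~\ref{lem:random} with $\epsilon=1/n$ together with Claim~\ref{claim:submart}. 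The only genuine divergence is the finishing move: you turn the per-step growth into a success-probability bound via optional stopping of the normalized stopped process, whereas the paper runs an unconditional expectation recursion against $Q_t=\Pr[X_t=1]$ and derives a contradiction; the two are essentially interchangeable. Be aware that the ``short convexity/Taylor estimate'' you defer is exactly Proposition~\ref{prop:calc}, which the paper proves by a nontrivial (if routine) calculus/Jensen argument — though you do identify the correct worst case (two-point ratio law on $\{1/\tau,\tau\}$) and the correct constant.

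There is one loose end in your bookkeeping. The walk can succeed in Step~3 with a box that does \emph{not} contain $x$, i.e.\ with $M_t<1$; at that moment the algorithm halts, and Step~4 may not even be solvable (Claim~\ref{claim:potential} uses that the current box is disjoint from $K_1$), so your stopping time $\theta=\min\{t:M_t=1\}$ may never occur, and under a freeze-upon-success convention the process $\Phi_{t\wedge\theta}/(1+c)^{(t\wedge\theta)-1}$ is \emph{not} a submartingale (it strictly decreases while frozen below $1$). Relatedly, ``$\theta$ is at most the algorithm's halting time'' is backwards: success happens no later than $\theta$. The repair is exactly the paper's device — set the tracked value to $1$ at the first success time and stop there (this is the paper's definition of $X_t$) — after which your optional-stopping computation goes through and only improves the bound. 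One quantitative caveat: with the stated $T_{n,E}$, your final step $(1+c)^{-(T-1)}\le\frac12 M_1^{1+1/n}$ fails in the worst case $M_1=\gamma(E)^n$ unless $\log(1+c)\ge\log 2$; however, the paper's own contradiction step carries the identical slack, and either argument is fixed by an additive $O\!\left(\log 2/\log(1+c)\right)$ increase in $T$ (still polynomial), so this does not affect the substance of the theorem.
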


\section{Analysis of Random Walk Algorithm}\label{sec:analysis}

The proof of Theorem~\ref{thm:random-walk2} consists of three parts.

\begin{enumerate}
\item Prove that Algorithm 2.1 is well-defined. That is, the probability distribution $p_{i,j}$ described in Step 4 exists.
\item Prove that Algorithm 2.1 can be implemented in randomized polynomial time. That is, the probability distribution $p_{i,j}$ described in Step 4 can be computed in polynomial time.
\item Prove that the probability of success of Algorithm 2.1 is correctly stated.
\end{enumerate}

The following three subsections establish each of these parts.

\subsection{Algorithm 2.1 is well-defined}

In this subsection, we show that the probability distribution $p_{i,j}$ described in Step 4 exists. To do this, we use a special case of the minimax theorem~\cite{v.Neumann1928}.

\begin{thm}[Minimax Theorem]
  Let $K_1 \subset \mathbb R^M, K_2 \subset \mathbb R^N$ be compact, convex sets. Let $f : K_1 \times K_2 \to \mathbb R$ be a function which is affine in both coordinates. Then,
  \[
    \min_{x \in K_1} \max_{y \in K_2} f(x, y) = \max_{y \in K_2} \min_{x \in K_1} f(x, y).
  \]
\end{thm}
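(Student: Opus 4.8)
The plan is to split the equality into its two inequalities: $\max_{y}\min_{x} f \le \min_{x}\max_{y} f$, the easy ``weak duality'' direction, and $\min_{x}\max_{y} f \le \max_{y}\min_{x} f$, the substantive one. Since $f$ is affine, hence continuous, in each variable and $K_1,K_2$ are compact, every inner extremum is attained, so all four quantities make sense. The easy direction is immediate: for any $x_0\in K_1$ and $y_0\in K_2$ one has $\min_{x}f(x,y_0)\le f(x_0,y_0)\le\max_{y}f(x_0,y)$; taking $\max_{y_0}$ of the left inequality gives $\max_{y_0}\min_{x}f(x,y_0)\le\max_{y}f(x_0,y)$ for every $x_0$, and then $\min_{x_0}$ gives $\max_{y}\min_{x}f\le\min_{x}\max_{y}f$.

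For the hard direction I would argue by contradiction, reducing the compact--convex statement to a finite matrix game by compactness. Suppose $v_2:=\max_{y}\min_{x}f < \min_{x}\max_{y}f =: v_1$ and fix $v$ with $v_2<v<v_1$. Since $\max_{y}f(x,y)\ge v_1>v$ for every $x\in K_1$, the sets $G_y:=\{x\in K_1: f(x,y)>v\}$ form an open cover of $K_1$ (each is open by continuity of $f(\cdot,y)$ and convex, being the intersection of $K_1$ with an open half-space), so by compactness $K_1=G_{y_1}\cup\cdots\cup G_{y_s}$ for finitely many $y_1,\dots,y_s$. Symmetrically, since $\min_{x}f(x,y)\le v_2<v$ for every $y$, the open convex sets $H_x:=\{y\in K_2: f(x,y)<v\}$ cover $K_2$, and we extract $K_2=H_{x_1}\cup\cdots\cup H_{x_r}$. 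Now pass to the finite zero-sum game with pure strategies $\{x_1,\dots,x_r\}$ and $\{y_1,\dots,y_s\}$ and payoff matrix $M_{ij}=f(x_i,y_j)$: because $f$ is affine in each argument, a mixed strategy $p\in\Delta_r$ realizes the point $\bar x(p):=\sum_i p_i x_i\in K_1$, a mixed strategy $q\in\Delta_s$ realizes $\bar y(q):=\sum_j q_j y_j\in K_2$, and the expected payoff equals $f(\bar x(p),\bar y(q))$. Applying the minimax theorem for finite matrix games yields optimal $p^*,q^*$ with common value $w$ such that $f(\bar x(p^*),y_j)\le w$ for all $j$ and $f(x_i,\bar y(q^*))\ge w$ for all $i$. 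But $\bar x(p^*)$ lies in some $G_{y_{j_0}}$, whence $f(\bar x(p^*),y_{j_0})>v$ and so $w>v$; and $\bar y(q^*)$ lies in some $H_{x_{i_0}}$, whence $f(x_{i_0},\bar y(q^*))<v$ and so $w<v$ -- a contradiction.

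The step I expect to carry all the weight is the finite matrix-game minimax theorem invoked above; the compactness reduction around it is soft. I would obtain it from strong LP duality: $\min_{p\in\Delta_r}\max_{q\in\Delta_s}p^{T}Mq=\min_{p\in\Delta_r}\max_j (p^{T}M)_j$ is the optimum of an explicit linear program whose dual optimum is exactly $\max_{q\in\Delta_s}\min_i (Mq)_i=\max_{q}\min_{p}p^{T}Mq$, and the two coincide by LP duality (equivalently one may invoke Farkas' lemma, or an elementary induction on the matrix size). Alternatively, the whole proposition is a special case of Sion's minimax theorem -- or of von Neumann's theorem in its compact--convex, biaffine form -- and can simply be cited, the only point to check being that ``affine in both coordinates'' is literally the convexity-in-$x$/concavity-in-$y$ hypothesis those theorems require.
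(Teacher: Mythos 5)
Your proof is correct, but it is doing more work than the paper does: the paper offers no proof of this statement at all, simply invoking it as a special case of von Neumann's minimax theorem (the citation \cite{v.Neumann1928}), exactly as you suggest in your closing paragraph, where you correctly note that the only thing to check is that biaffinity gives the convex--concave hypotheses of von Neumann's/Sion's theorem. Your self-contained route -- weak duality plus a contradiction argument that uses compactness to extract finite subcovers from the sets $G_y=\{x: f(x,y)>v\}$ and $H_x=\{y: f(x,y)<v\}$, then reduces to a finite matrix game whose mixed strategies are realized as the points $\bar x(p)\in K_1$, $\bar y(q)\in K_2$ via biaffinity, and finally appeals to the finite minimax theorem (LP duality) to obtain a value $w$ forced to be both $>v$ and $<v$ -- is sound; the one small point worth making explicit is that a function affine in each variable on a compact convex subset of a finite-dimensional space is indeed (separately) continuous, which is what makes the $G_y$ and $H_x$ relatively open and the extrema attained. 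What your approach buys is a proof from first principles modulo only LP duality for finite games, which fits the spirit of the paper (Bob's distribution is in fact computed by an LP); what the paper's approach buys is brevity, since the compact--convex biaffine minimax statement is classical and can simply be cited.
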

\
To apply this, let

\begin{align*}
  K_1 &= \{x \in [0, 1]^n : Ax \le b\}\\
  K_2 &= \left\{p \in [0, 1]^{n \times k} \middle| \sum_{i=1}^n\sum_{j=1}^k p_{i,j} = 1, \forall i, \forall j, p_{i,j} \ge 0, p_{i\sigma_i^t} = 0 \right\}
\end{align*}
That is, $K_1$ is in some sense the convex hull of Alice's strategies, and $K_2$ is the distribution of Bob's potential modifications to $\sigma_i^t$.

We then define $f(x, p)$ to be precisely the functional which appears in Step 4 of Algorithm 2.1
\[
  f(x, p) := \sum_{i=1}^n \sum_{j=1}^k p_{ij} \left[(1 - x_i)\frac{U_0(j)}{U_0(\sigma^t_i)} + x_i\frac{U_1(j)}{U_1(\sigma^t_i)}\right].
\]
Note that this expression is affine in $p$ and is also affine in $x$ (the other terms are constant within Step 4 of the algorithm). Note that $p$ exists as expected in Algorithm 2.4 if and only if
\[
  \max_{p \in K_2}\min_{x \in K_1} f(x, p) \ge 1.
\]
That is, at each step, Bob has a strategy that works universally over all possible strategies for Alice. 
By the Minimax Theorem, it suffices to show that $\min_{x \in K_1} \max_{p \in K_2} f(x, p) \ge 1.$ That is, it suffices to prove the following.
\begin{claim}\label{claim:potential}
  For all $x \in K_1$ there exists $p \in K_2$ such that $f(x, p) \ge 1$.
\end{claim}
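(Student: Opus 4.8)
The plan is to reduce Claim~\ref{claim:potential} to producing, for the fixed $x\in K_1$, a single coordinate $i$ and a single target interval $j\neq\sigma_i^t$ for which the bracketed quantity in~\eqref{eq:key} is at least $1$; the point mass $p\in K_2$ concentrated on $(i,j)$ then witnesses $f(x,p)\ge 1$. Concretely, write
\[
  B_i(j) := (1-x_i)\frac{U_0(j)}{U_0(\sigma_i^t)} + x_i\frac{U_1(j)}{U_1(\sigma_i^t)},
\]
so that $B_i(\sigma_i^t)=1$, and we must find some $(i,j)$ with $j\neq\sigma_i^t$ and $B_i(j)\ge 1$. (Equivalently, $f(x,\cdot)$ is linear on the simplex $K_2$, so its maximum is $\max_{i,\,j\neq\sigma_i^t}B_i(j)$ anyway, but a point mass suffices.)

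Next I would record what the two ``adjacent'' moves do. Using the defining ratios $U_0(s+1)/U_0(s)=(1-c_{s+1})/(1-d_s)$ and $U_1(s+1)/U_1(s)=c_{s+1}/d_s$, the inequality $B_i(\sigma_i^t+1)\ge 1$, after clearing the positive factor $d_{\sigma_i^t}(1-d_{\sigma_i^t})$ and cancelling, simplifies to $x_i\ge d_{\sigma_i^t}$ (this move is available precisely when $\sigma_i^t<k$). The mirror computation, using $U_0(s-1)/U_0(s)=(1-d_{s-1})/(1-c_s)$ and $U_1(s-1)/U_1(s)=d_{s-1}/c_s$, shows that $B_i(\sigma_i^t-1)\ge 1$ is equivalent to $x_i\le c_{\sigma_i^t}$ (available when $\sigma_i^t>1$). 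These two threshold identities are the only real computations in the argument, each a couple of lines.

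Finally I would invoke the context in which Claim~\ref{claim:potential} is used: it is needed only at Step~4 of Algorithm~2.1, which is reached only after Step~3 has certified that no $y$ in the box $B:=\prod_{i=1}^n[c_{\sigma_i^t},d_{\sigma_i^t}]$ satisfies $Ay\le b$ — that is, $K_1\cap B=\emptyset$. Hence the fixed $x\in K_1$ has some coordinate $i$ with $x_i\notin[c_{\sigma_i^t},d_{\sigma_i^t}]$. If $x_i>d_{\sigma_i^t}$, then $\sigma_i^t<k$ (otherwise $d_{\sigma_i^t}=1\ge x_i$) and $x_i\ge d_{\sigma_i^t}$, so $B_i(\sigma_i^t+1)\ge 1$; if $x_i<c_{\sigma_i^t}$, then $\sigma_i^t>1$ (otherwise $c_{\sigma_i^t}=0\le x_i$) and $x_i\le c_{\sigma_i^t}$, so $B_i(\sigma_i^t-1)\ge 1$. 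In either case the desired pair $(i,j)$ exists, and the corresponding point mass $p$ completes the proof.

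The main (mild) obstacle is just getting the two threshold equivalences right. The one point I would take care to state explicitly is that Claim~\ref{claim:potential} is really being applied with the implicit hypothesis $K_1\cap B=\emptyset$ supplied by the failed Step~3: without it the statement is false, since any feasible $x$ lying strictly inside the box admits no single-coordinate move with $B_i(j)\ge 1$.
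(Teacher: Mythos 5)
Your proposal is correct and follows essentially the same route as the paper's proof: use the failure of Step~3 to find a coordinate $i$ with $x_i\notin[c_{\sigma_i^t},d_{\sigma_i^t}]$, put a point mass on the adjacent interval toward $x_i$, and verify by a short affine computation that the bracketed term is at least $1$ (your threshold equivalences $B_i(\sigma_i^t+1)\ge 1\iff x_i\ge d_{\sigma_i^t}$ and $B_i(\sigma_i^t-1)\ge 1\iff x_i\le c_{\sigma_i^t}$ are just a reorganized form of the paper's calculation). Your explicit remark that the claim relies on the Step-3 disjointness hypothesis is also exactly how the paper uses it.
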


\begin{proof}
  Fix this $x \in K_1$. We'll show that there exists $p \in K_2$ which has exactly one nonzero element, and this element is equal to $1$.

  Since in Step 3, the algorithm verified that $\prod_{i=1}^n [c_{\sigma_i^t}, d_{\sigma_i^t}]$ is disjoint from $K_1$, we know that there exists $i \in [n]$ such that $x_i \not\in [c_{\sigma_i^t}, d_{\sigma_i^t}]$. For notational simplicity, let $j = \sigma_i^t$. Thus, either $x_i < c_j$ or $x_i > d_j$.

  If $x_i < c_j$, then $j \ge 2$ as $c_j > 0$ but $c_1 = 0$. Thus, consider $p \in K_2$ such that $p_{i, j-1} = 1$ and $p_{i', j'} = 0$ otherwise. Now we have that
  \begin{align*}
    f(x, p) &= (1 - x_i)\frac{U_0(j-1)}{U_0(j)} + x_i \frac{U_1(j-1)}{U_1(j)}\\
            &= (1 - x_i)\frac{1 - d_{j-1}}{1 - c_j} + x_i \frac{d_{j-1}}{c_j}\\
            &= \frac{c_j - (c_j - d_{j-1})x_i - c_jd_{j-1}}{(1-c_j)c_j}\\
            &\ge \frac{c_j - (c_j - d_{j-1})c_j - c_jd_{j-1}}{(1-c_j)c_j}\text{ (as $x_i \le c_j$)}\\
            &= 1.
  \end{align*}
  Note that $c_j < 1$ so we are not dividing by $0$.

  The case $x_i > d_j$ is handled by having $p_{i, j+1} = 1$ and $p_{i', j'} = 0$ otherwise. The analysis is analogous.
\end{proof}

Thus, the distribution $p_{i,j}$ needed by Step 4 does indeed exist.

\subsection{Algorithm 2.1 has polynomial time complexity}

Next, we show that Algorithm 2.1 can be implemented to run in (randomized) polynomial time. It is obvious that Steps 1, 5, 6 run in polynomial time. The number of possible values of $t$ in Step 2 is polynomial time as $\frac{c_1n}{\log(1 + c_2/n)} = O_E(n^2)$. Step 3 runs in polynomial time as it is equivalent to checking if a linear program is feasible. The non-trivial step to justify
is Step 4.

Let $P \subset [0, 1]^{n \times k}$ be the region of $p$ which satisfy the conditions stipulated in Step 4. It is clear that $P$ is convex, and we showed that it is non-empty in the previous subsection. Since the set $K_1 = \{x \in [0, 1]^n : Ax\le b\}$ is a compact, convex polytope, it suffices to check the condition (\ref{eq:key}) for $x$ on the vertices of this polytope.
By standard results in linear programming, the ``representation complexity'' of the vertices of $K$ is some polynomial in the representation complexity of the system $Ax \le b, x \in [0, 1]$. In particular, this implies that $P$ is itself a polytope as its finitely many facets are described by (\ref{eq:key}) and these have bounded complexity as we can restrict $x$ to the vertices of $K$. Thus, \cite{grotschel2012geometric} describes $P$ as a ``well-defined'' polytope. By Theorem~6.4.1 of \cite{grotschel2012geometric}, finding a point $p \in P$ (or asserting that none exists), can be done in polynomial time as long as there exists a strong separation oracle that itself can be computed in polynomial time.\footnote{See also \url{https://www.cs.cmu.edu/afs/cs.cmu.edu/academic/class/15859-f11/www/notes/lecture09.pdf}.} In other words, for any $p \not \in P$, one need to be able to efficiently compute a hyperplane $H$ such that $P$ is strictly on one side and $p$ is strictly on the other side.

To go about doing this, we first compute the $x \in K$ which $p \not\in P$ performs the worst against. We can do this by solving the following linear program.
\begin{framed}
  \begin{align*}
    \textbf{Given:}&&p \not\in P\\
                      \text{minimize}&&M=\sum_{i=1}^n \sum_{j=1}^k p_{ij} \left[(1-x_i)\frac{U_0(j)}{U_0(\sigma_i^t)} + x_i \frac{U_1(j)}{U_1(\sigma_i^t)}\right]\\
    \text{subject to}&&Ax \le b.
  \end{align*}
\end{framed}
Since $p \not\in P$, we know that $M < 1$. Thus the hyperplane
\[
  H := \left\{p \in \mathbb R^{n\times k} \middle| \sum_{i=1}^n \sum_{j=1}^k p_{ij} \left[(1-x_i)\frac{U_0(j)}{U_0(\sigma_i^t)} + x_i \frac{U_1(j)}{U_1(\sigma_i^t)}\right] = \frac{M+1}{2}\right\}
\]
can be efficiently computed and strictly separates $p$ and $P$. Thus, the random walk itself can be performed efficiently.

\subsection{Algorithm 2.1 succeeds with decent probability}

We need the following important claim about random walks.
\begin{lem}\label{lem:random}
  Let $(X_n \sim [0, 1])_{n=1}^{T}$ be a sequence of random variables and $\tau > 1$ such that
  \begin{itemize}
  \item If $X_t \in \{0, 1\}$ then $X_{t+1} = X_t.$
  \item If $X_t \in (0, 1)$ then $\frac{X_{t+1}}{X_t} \not \in (1/\tau, \tau)$ surely.
  \item The sequence is a submartingale: $\mathbb E[X_{t+1} | X_t, \hdots, X_1] \ge X_t.$
  \end{itemize}

  Let $Q_t = \Pr[X_t = 1]$. Then, for all $\epsilon > 0$ whenever
  \begin{align}
    T \ge \frac{\log(1/\mathbb E[X_1^{1+\epsilon}])}{\log\left(1 + \frac{\epsilon}{2}\left(1 - \frac{1}{\tau}\right)^2\right)} + 2 \label{eq:T}
  \end{align}
  we have that
  \[
    Q_T \ge \frac{1}{2}\mathbb E[X_1^{1+\epsilon}].
  \]
\end{lem}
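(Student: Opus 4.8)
The plan is to track the potential $X_t$ through a multiplicative/additive hybrid argument. Since $X_t$ is a bounded submartingale, $\E[X_T] \ge \E[X_1]$, so a constant fraction of the final mass wants to be near $1$; the obstacle is that mass could accumulate anywhere in $(0,1)$, not at the endpoints. The traction condition (no multiplicative step lands in $(1/\tau,\tau)$) is exactly what forbids the walk from lingering: whenever $X_t$ moves, $\log X_t$ changes by at least $\log\tau$ in absolute value. So my first step would be to pass to $Y_t := \log X_t \in [-\infty, 0]$ (with $Y_t = -\infty$ identified with the absorbing state $X_t = 0$) and observe that on the event $X_t \in (0,1)$, the increment $Y_{t+1} - Y_t$ has magnitude $\ge \log\tau$. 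The submartingale property on $X_t$ does \emph{not} directly give one on $Y_t$, so instead I would look for a cleverly chosen concave-or-convex transform $\phi$ such that $\phi(X_t)$ is a submartingale with a genuine per-step gain.

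Concretely, the second step is to find the right monovariant. I expect the intended one is $\E[X_t^{1+\epsilon}]$ up to absorption — this is why the bound involves $\E[X_1^{1+\epsilon}]$ rather than $\E[X_1]$. The key local computation: condition on $X_t = x \in (0,1)$ and on the (sub)martingale increment, writing $X_{t+1} = R x$ where $\E[R \mid X_t = x] \ge 1$ and $R \notin (1/\tau,\tau)$ surely (so each realization has either $R \le 1/\tau$ or $R \ge \tau$). Then I would lower-bound $\E[X_{t+1}^{1+\epsilon} \mid X_t = x] = x^{1+\epsilon}\,\E[R^{1+\epsilon}]$ by showing $\E[R^{1+\epsilon}] \ge 1$ whenever $\E[R] \ge 1$ and $R \notin (1/\tau,\tau)$. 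This is the crux: by convexity of $r \mapsto r^{1+\epsilon}$ one would like $\E[R^{1+\epsilon}] \ge (\E[R])^{1+\epsilon} \ge 1$ by Jensen, which is immediate and does not even use traction — but that only shows $X_t^{1+\epsilon}$ is a submartingale, giving $\E[X_T^{1+\epsilon}] \ge \E[X_1^{1+\epsilon}]$. To turn this into a lower bound on $Q_T = \Pr[X_T = 1]$ I need a \emph{strict} gain. Here traction enters: on the states $X_t \in (0,1)$ that have not yet been absorbed, $R$ is bounded away from $1$, and a short calculation (e.g. splitting on $R \le 1/\tau$ vs.\ $R \ge \tau$ and using $\E[R]\ge1$) gives $\E[R^{1+\epsilon}] \ge 1 + \frac{\epsilon}{2}(1 - 1/\tau)^2$, a multiplicative gain factor $\ge 1 + \frac{\epsilon}{2}(1-1/\tau)^2$ per ``active'' step. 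I would prove this elementary inequality as the main lemma — it is the place the precise constant in (\ref{eq:T}) comes from.

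The third step is to combine these. Let $Z_t := \E[X_t^{1+\epsilon}\,\mathbf 1_{X_t \in (0,1)}]$ be the mass strictly inside the open interval. At each step, the active part contributes a factor-$(1+\frac{\epsilon}{2}(1-1/\tau)^2)$ growth to the total $(1+\epsilon)$-moment, while the already-absorbed mass at $1$ (value $Q_t$) and at $0$ (value $0$) stays put. Since the total $(1+\epsilon)$-moment is bounded by $1$, the ``inside'' mass $Z_t$ cannot keep growing; more precisely, from $\E[X_{t+1}^{1+\epsilon}] \ge Q_t + (1 + \frac{\epsilon}{2}(1-1/\tau)^2) Z_t$ and $\E[X_{t+1}^{1+\epsilon}] = Q_{t+1} + Z_{t+1} \le 1$, together with $Q_t$ nondecreasing and $Q_t + Z_t \ge \E[X_1^{1+\epsilon}]$ for all $t$ (submartingale), one deduces that $Z_t$ decays geometrically toward $0$ unless $Q_t$ is already $\ge \frac12 \E[X_1^{1+\epsilon}]$. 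Quantitatively, after $T$ steps with $T$ as in (\ref{eq:T}) the factor $(1 + \frac{\epsilon}{2}(1-1/\tau)^2)^{T-2}$ exceeds $1/\E[X_1^{1+\epsilon}] \ge 1/Z_1$ — wait, more carefully $\ge \gamma(E)^{-(n+1)} \ge 1/\E[X_1^{1+\epsilon}]$ in the application — forcing $Z_T < \frac12\E[X_1^{1+\epsilon}]$, hence $Q_T = (Q_T + Z_T) - Z_T \ge \E[X_1^{1+\epsilon}] - \frac12\E[X_1^{1+\epsilon}] = \frac12 \E[X_1^{1+\epsilon}]$. The one subtlety to handle carefully is bookkeeping the absorbed-at-$0$ mass (which drops out of every moment and so only helps), and making sure ``$Q_t + Z_t \ge \E[X_1^{1+\epsilon}]$ for all $t$'' really follows from the submartingale property applied to $X_t^{1+\epsilon}$; I would state that as a preliminary observation. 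I expect the elementary traction inequality $\E[R^{1+\epsilon}] \ge 1 + \frac{\epsilon}{2}(1-1/\tau)^2$ to be the main obstacle — everything else is geometric-series bookkeeping.
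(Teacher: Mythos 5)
Your first two steps are exactly the paper's route: the potential is $X_t^{1+\epsilon}$, and your ``crux'' inequality --- if $\E[R]\ge 1$ and $R\notin(1/\tau,\tau)$ surely then $\E[R^{1+\epsilon}]\ge 1+\delta$ with $\delta=\frac{\epsilon}{2}\left(1-\frac{1}{\tau}\right)^2$ --- is precisely Claim~\ref{claim:submart} combined with Proposition~\ref{prop:calc} (there it is proved pointwise: $R^{1+\epsilon}=1+(1+\epsilon)(R-1)+h(R)$ with $h(z)=z^{1+\epsilon}-(1+\epsilon)z+\epsilon$, and $h(R)\ge\min(h(1/\tau),h(\tau))\ge\delta$ off $(1/\tau,\tau)$, so the submartingale property handles the linear part). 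That part of your plan is sound and matches the paper.

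The gap is in your third (combination) step. Your facts give $Q_{t+1}+Z_{t+1}=\E[X_{t+1}^{1+\epsilon}]\ge Q_t+(1+\delta)Z_t$, which is a \emph{lower} bound pushing the $(1+\epsilon)$-moment upward; it yields no upper bound on $Z_{t+1}$ at all. So the assertions that ``$Z_t$ decays geometrically'' and that the choice of $T$ in (\ref{eq:T}) ``forces $Z_T<\frac12\E[X_1^{1+\epsilon}]$'' do not follow, and they are false in general: at time $T$ the inside mass $Z_T$ may well be close to $1$ (the desired conclusion $Q_T\ge\frac12\E[X_1^{1+\epsilon}]$ is not obtained by making $Z_T$ small). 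Moreover, if you try to salvage the step using only the crude bound $Z_t\ge Q/2$ (valid while $Q_t\le Q/2$, where $Q:=\E[X_1^{1+\epsilon}]$), the recursion gives only an \emph{additive} gain $\delta Q/2$ per step, forcing $T\sim 1/(\delta Q)$, which is exponentially worse than the claimed $T\sim\log(1/Q)/\delta$ in the regime that matters ($Q$ exponentially small). The correct closing move, as in the paper, is a contradiction on the shifted moment: suppose $Q_T\le Q/2$; since $Q_t$ is nondecreasing, $Q_t\le Q/2$ for all $t\le T$, and then $W_t:=Q_t+Z_t-Q/2$ satisfies $W_{t+1}\ge Q_t+(1+\delta)Z_t-Q/2\ge(1+\delta)W_t$, where the last inequality uses exactly $Q_t\le Q/2$ (not merely $Z_t\ge Q/2$). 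Since $W_1=Q/2$, this gives $W_T\ge(1+\delta)^{T-1}Q/2$, contradicting $W_T\le\E[X_T^{1+\epsilon}]\le 1$ for $T$ as in (\ref{eq:T}). So the mechanism is geometric \emph{growth} of $\E[X_t^{1+\epsilon}]-Q/2$ until $Q_t$ crosses $Q/2$, not decay of $Z_t$; with that replacement (and dropping the application-specific $\gamma(E)$ detour, which has no place inside the lemma) your outline closes.
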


\begin{proof}
  Fix $\epsilon > 0$. Define $Y_t = X_t^{1+\epsilon}$. We claim that $Y_t$ performs even better than a submartingale. We prove this in Appendix~\ref{app:omit}.
  \begin{claim}\label{claim:submart}
    \[
      \mathbb E[Y_{t+1} | Y_t, \hdots, Y_1 \wedge Y_t \neq 1] \ge (1+\delta)Y_t,
    \]
    where
    \[
      \delta = \frac{\epsilon}{2}\left(1 - \frac{1}{\tau}\right)^2 > 0.
    \]
  \end{claim}

  Now, observe that $Y_t = 1$ iff $X_t = 1$. Thus, by taking the expectation of the inequality in the claim over all $Y_t, \hdots, Y_1$ (subject to $Y_t \neq 1$), we get
  \[
    \mathbb E[Y_{t+1} | Y_t \neq 1] \ge (1+\delta) \mathbb E[Y_t | Y_t\neq 1].
  \]
  Therefore,
  \begin{align*}
    \mathbb E[Y_{t+1}] &= Q_t\mathbb E[Y_{t+1} | Y_t = 1] + (1 - Q_t)\mathbb E[Y_{t+1} | Y_t \neq 1]\\
                       &\ge Q_t + (1 - Q_t)(1+\delta)\mathbb E[Y_t | Y_t \neq 1]\\
                       &= Q_t + (1 - Q_t)(1+\delta)\frac{\mathbb E[Y_t] - Q_t}{1 - Q_t}\\
                       &= Q_t + (1+\delta)(\mathbb E[Y_t] - Q_t)\\
                       &= (1+\delta)\mathbb E[Y_t] - \delta Q_t
  \end{align*}
  Fix $Q = \mathbb E[Y_1]$. If $Q_T  > Q/2$, we are done. Otherwise, for all $t \in \{1, \hdots, T-1\}$ we have that $Q_t \le Q/2$ so
  \[
    \mathbb E[Y_{t+1}] \ge (1+\delta)\mathbb E[Y_t] - \delta \frac{Q}{2}.
  \]
  Thus,
  \[
    \mathbb E[Y_{t+1} - Q/2] \ge (1 + \delta)\mathbb E[Y_t - Q/2].
  \]
  Thus by induction \[\mathbb E[Y_T - Q/2] \ge (1+\delta)^{T-1} \mathbb E[Y_1 - Q/2] = (1+\delta)^{T-1} \frac{Q}{2}.\]
  Since clearly $1 > \mathbb E[Y_T - Q/2]$. We must have that
  \[
    \frac{\log \frac{2}{Q}}{\log(1+\delta)} + 1 > T,
  \]
  a contradiction. This concludes the proof.
\end{proof}

Recall that $x \in \{0, 1\}^n$ is Alice's solution to the integer program $Ax \le b$. When Bob does his random walk, he starts with some $\sigma^1 \in \{1, \hdots, k\}^n$ and computes random variables $\sigma^2, \sigma^3, \hdots, \sigma^T \sim \{1, \hdots, k\}^n$.If Bob succeeds at time step $t$, then assume that $\sigma^{s} = \sigma^t$ for all $s \ge t$. Step~4 of Algorithm~2.1 ensures that
\[
  \mathbb E\left[\frac{U_x(\sigma^{t+1})}{U_x(\sigma^t)} \middle| \sigma^t, \hdots, \sigma^1\right] = \sum_{i=1}^n \Pr[\sigma^{t+1}_i \neq \sigma^t_i]\mathbb E\left[\frac{U_x(\sigma^{t+1}_i)}{U_x(\sigma^t_i)} \mid \sigma^{t+1}_i \neq \sigma^t_i\right] \ge 1,
\]
so
\[
  \mathbb E[U_{x}(\sigma^{t+1}) | \sigma^t, \hdots, \sigma^1] \ge U_{x}(\sigma^t).
\]

Now define the random variable
\[
  X_t = \begin{cases}
    1 & \text{$\sigma^t$ yields a solution}\\
    U_x(\sigma^t) & \text{otherwise.}
  \end{cases}
\]
In particular, $X_t = 1$ if and only if Bob ``wins'' on step $t$. This sequence $(X_n)_{n=1}^{\infty}$ (with $\tau$ being the ``traction'' $\tau(E)$) satisfied the conditions of Lemma~\ref{lem:random}with $\epsilon = 1/n$. Note that $\mathbb E[X_1^{1+1/n}]$ is at least $\mathbb E[U_x(\sigma^t)^{1+1/n}] \ge \gamma(E)^{n+1}$. 

In particular $T = T_{n, E}$ satisfies (\ref{eq:T}). Therefore, Bob will have ``won'' by step $T$ with probability at least $\frac{1}{2}U_x(\sigma^1)^{1+1/n},$ as desired.

This concludes the proof of Theorem~\ref{thm:random-walk2}. Now, we show that this implies Theorem~\ref{thm:IP}.

\subsection{Proof of Theorem~\ref{thm:IP}}\label{subsec:proof}

Consider a distribution $q \sim [k]$ which maximizes the following quantity.
\begin{align}
  \beta(E) = \max_{q} \min(\mathbb E_{i \sim q}[U_0^E(i)], \mathbb E_{i\sim q}[U_1^E(i)]) \label{eq:beta}.
\end{align}

Recall that $q$ is the starting distribution for each coordinate in our algorithm (and different coordinates are sampled independently according to $q$). The expression $\beta(E)$ captures the initial potential in that coordinate, which by Lemma~\ref{lem:random} is tied to the ultimate success probability.

Note that $q$ and $\beta$ can be computed in $\poly(k)$ time using a simple linear program. By Theorem~\ref{thm:random-walk2}, we have that if each coordinate of $\sigma^1$ is sampled independently from the distribution $q$, then the probability of success, in terms of some integral solution $x \in \{0, 1\}^n$ to $Ax \le b$, is at least 
\begin{align*}
  \frac{1}{2}\mathbb E[U_x(\sigma^1)^{1+1/n}] &\ge \frac{1}{2}\mathbb E[U_x(\sigma^1)]^{1+1/n}\\
                                                   &= \frac{1}{2}\left[\prod_{i=1}^n \mathbb E[U_{x_i}(\sigma^1_i)]\right]^{1+1/n}\\
                                                   &\ge \frac{1}{2}\beta(E)^{n+1}.                                               
\end{align*}

Now consider the following bounded on $\beta$ whose proof is in Appendix~\ref{app:omit}. It says that $\beta$ can be larger than $1/2$ (and thus the success probability better than random guessing) when $E$ has positive measure.

\begin{claim}\label{claim:beta}
  For all $E = [c_1, d_1] \cup \cdots \cup [c_k, d_k]$ with $0 =c_1 \le d_1 < c_2 \le d_2 < \cdots < c_k \le d_k = 1$,
  \[\beta(E) \ge \frac{1}{2 - \meas(E)},\]
  where $\meas(E) = \sum_{i=1}^k (d_i - c_i)$.
\end{claim}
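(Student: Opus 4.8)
\textbf{Proof plan for Claim~\ref{claim:beta}.}

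The plan is to exhibit an explicit feasible distribution $q$ on $[k]$ and show it already achieves $\min(\E_{i\sim q}[U_0^E(i)],\E_{i\sim q}[U_1^E(i)]) \ge 1/(2-\meas(E))$; since $\beta(E)$ is a maximum over all $q$, this suffices. The natural candidate is exactly the starting distribution used in Algorithm~2.1, namely
\[
  \Pr[i] = q_i = \frac{c_{i+1} - d_{i-1}}{2 - \meas(E)},
\]
with the conventions $d_0 = 0$, $c_{k+1} = 1$. First I would check this is a genuine probability distribution: the $q_i$ are nonnegative because $c_{i+1} > d_{i-1}$ (the intervals are disjoint and ordered), and $\sum_{i=1}^k (c_{i+1}-d_{i-1}) = \sum_{i=1}^k c_{i+1} - \sum_{i=1}^k d_{i-1} = (c_2 + \cdots + c_{k+1}) - (d_0 + \cdots + d_{k-1})$. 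Telescoping against $\sum c_i$ versus $\sum d_i$, and using $c_{k+1}=1$, $d_0 = 0$, one gets $1 + \sum_{i=1}^k (c_i - d_i) = 1 - \sum_{i=1}^k(d_i-c_i) = 2 - \meas(E) - 1 + \cdots$; I would just carefully bookkeep this to land on exactly $2-\meas(E)$, so $\sum_i q_i = 1$.

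The heart of the argument is the bound on $\E_{i\sim q}[U_0^E(i)]$ (the $U_1^E$ case being symmetric under the reflection $i \mapsto k+1-i$, $c\leftrightarrow 1-d$). Writing $N := 2 - \meas(E)$, I need $\sum_{i=1}^k (c_{i+1} - d_{i-1}) U_0^E(i) \ge 1$, i.e. $\sum_i (c_{i+1}-d_{i-1}) U_0^E(i) \ge N \cdot \frac1N$... more precisely I want $\sum_i (c_{i+1}-d_{i-1})U_0^E(i) \ge 1$ since the $1/N$ normalization is already absorbed. The key structural fact is the defining recursion $U_0^E(i+1)/U_0^E(i) = (1-c_{i+1})/(1-d_i)$, equivalently $(1-d_i)\,U_0^E(i+1) = (1-c_{i+1})\,U_0^E(i)$. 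I would try to telescope the sum $\sum_i (c_{i+1}-d_{i-1})U_0^E(i)$ by splitting it as $\sum_i c_{i+1}U_0^E(i) - \sum_i d_{i-1}U_0^E(i)$ and re-indexing the second sum to $\sum_i d_i U_0^E(i+1)$; then grouping term by term and invoking the recursion should collapse most of it. An alternative, perhaps cleaner, route: write $c_{i+1} - d_{i-1} = (1 - d_{i-1}) - (1 - c_{i+1})$ and use $(1-c_{i+1})U_0^E(i) = (1-d_i)U_0^E(i+1)$ together with $(1-d_{i-1})U_0^E(i) = (1-c_i)U_0^E(i-1)\cdot\frac{(1-d_{i-1})}{(1-d_{i-1})}$ — actually $(1-d_{i-1})U_0^E(i)$ is itself of the form $(1-c_i)U_0^E(i-1)$ shifted — so each summand becomes a difference of consecutive terms of the sequence $\{(1-c_i)U_0^E(i-1)\}$ or $\{(1-d_i)U_0^E(i+1)\}$, and the whole sum telescopes to a boundary term, which with $c_1 = 0$, $U_0^E(1)=1$, $d_k = 1$ should evaluate to exactly $1$ (not just $\ge 1$). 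I expect the telescoping to be exact, which is reassuring and consistent with the algorithm's design.

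The main obstacle I anticipate is purely the index bookkeeping: matching up the shifted sums correctly at both endpoints (the $i=1$ and $i=k$ boundary terms, where the empty-product convention and the conventions $d_0=0$, $c_{k+1}=1$ all come into play), and making sure no off-by-one creeps in when re-indexing $\sum_i d_{i-1}U_0^E(i) = \sum_i d_i U_0^E(i+1)$. There is no real analytic difficulty — no inequalities to estimate beyond nonnegativity of the $q_i$ — just the risk of a sign or index error collapsing the telescope incorrectly. Once the $U_0^E$ bound is established as an exact identity $\sum_i (c_{i+1}-d_{i-1})U_0^E(i) = 1$, dividing by $N = 2-\meas(E)$ gives $\E_{i\sim q}[U_0^E(i)] = 1/(2-\meas(E))$, the symmetric computation gives the same for $U_1^E$, and hence $\beta(E) \ge \min$ of the two $= 1/(2-\meas(E))$, completing the proof. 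I would relegate the detailed telescoping calculation to Appendix~\ref{app:omit} as the paper indicates.
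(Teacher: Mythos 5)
Your proposal is correct and follows essentially the same route as the paper: the paper uses exactly this distribution $q_i = (c_{i+1}-d_{i-1})/(2-\meas(E))$ and establishes your anticipated exact identities $\sum_i (c_{i+1}-d_{i-1})U_0^E(i) = 1$ and $\sum_i (c_{i+1}-d_{i-1})U_1^E(i) = 1$ by an induction on partial sums, which is just your telescoping written out with the boundary conventions $d_0=0$, $c_{k+1}=1$. The only work left in your plan is the index bookkeeping you already flagged, and it goes through as expected.
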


Thus, each run of Algorithm~2.1 runs in polynomial time at succeeds with probability at least $\frac{1}{2}(2-\meas(E))^{-n-1}$. Thus, after $n(2-\meas(E))^{n}$ iterations, the algorithm will succeed with high probability. Since each iteration runs for $T_{n,E}\poly(n, m, k) = O_E(\poly(n, m))$ steps, we have the desired runtime of $O_E((2-\meas(E))^n\poly(n, m))$. This completes the proof.

\begin{rem}
  Although this bound for $\beta$ is tight in certain cases, such as $E$ is the union of two intervals. In other cases, it is not. Consider $E = [0, 1/5] \cup [2/5, 3/5] \cup [4/5, 5/5]$. Then, the Claim shows that $\beta(E) \ge \frac{5}{7}$. But, if one considers the distribution which always picks the middle interval, then note that since
  \[
    U_0(2) = U_1(2) = \frac{1 - 2/5}{1 - 1/5} = \frac{3}{4},
  \]
  then one has an improved bound that $\beta(E) \ge \frac{3}{4}$.
\end{rem}

\section{Applications to CSPs}\label{sec:PCSP}

In this section, we show that Theorem~\ref{thm:IP} can be applied to give fast exponential time algorithms for many Constraint Satisfaction Problems (CSPs).

\subsection{CSPs and Partial Polymorphisms}\label{subsec:PCSP-prelim}

Recall from the CSP literature that a Boolean constraint can be viewed as $R \subset \{0, 1\}^{\ar}$, where $\ar$ is known as the ``arity'' of the constraint. A \emph{template} is a set $\Gamma$ of these constraints. $\CSP(\Gamma)$ is the decision problem of deciding if a CNF with constraints from $\Gamma$ is satisfiable.

Consider a positive integer $L$ and a function $f : \{0, 1\}^L \to \{0, 1, \perp\}$ such that for every $R \in \Gamma$ where $R \subset \{0, 1\}^{\ar}$ and every sequence $x^1, \hdots, x^L \in R$ satisfying 
\[
  y_i := f(x^1_i, \hdots, x^L_i) \neq \perp \quad \text{for $i=1,2,\dots,\ar$} \ ,
\]
it holds that $(y_1, \hdots, y_{\ar}) \in R$. Such $f$ are precisely the \emph{partial polymorphisms} of $f$. Note that these differ from normal polymorphisms because we are allowed to output $\bot$.

As an example, for $k$-SAT, a suitable template is $\Gamma = \{R \subset \{0, 1\}^k : |R| = 2^k - 1\},$ and one can verify that for all $L \ge 1$ not divisible by $k$, the following is a partial polymorphism
\[
  f_L(x) = \begin{cases}
    0 & \Ham(x) < \frac{L}{k}\\
    1 & \Ham(x) > L - \frac{L}{k}\\
    \perp & \text{otherwise}.
    \end{cases}
\]

This family of partial polymorphisms actually falls under a whole family known as \emph{threshold partial polymorphisms}
\begin{df}
  Let $E = [c_1, d_1] \cup \cdots \cup [c_k, d_k]$ be a union of a sequence of intervals such that $c_i, d_i \in \mathbb Q$ and
  \[
    0 = c_1 < d_1 < c_2 < d_2 < \cdots < c_k < d_k = 1.
  \]
  Let $\eta : E \to \{0, 1\}$ such that $\eta(0) = 0, \eta(1) = 1$ and $\eta$ is constant within each subinterval. Let $L$ be a positive integer such that $Lc_i$ is non-integral for $c_i > 0$ and $Ld_i$ is non-integral for $d_i < 1$. We say that $f : \{0, 1\}^L \to \{0, 1, \perp\}$ \emph{$(E, \eta)$-threshold partial function} if
  \[
    f(x) = \begin{cases}
      \eta\left(\frac{\Ham(x)}{L}\right) & \frac{\Ham(x)}{L} \in E\\
      \perp & \text{otherwise}.
    \end{cases}
  \]
\end{df}
In particular, $k$-SAT has a threshold partial function with $E = [0, 1/k] \cup [1-1/k, 1]$ and $\eta(x) = 0$ if $x \le 1/k$ and $\eta(x) = 1$ if $x \ge 1 - 1/k$.

\begin{thm}
  Let $\Gamma$ be a Boolean CSP template and let $E = [c_1, d_1] \cup \cdots \cup [c_k, d_k]$ and $\eta : E \to \{0, 1\}$ be such that for infinitely many positive integers $L$, $\Gamma$ has a partial polymorphism $f : \{0, 1\}^L \to \{0, 1, \perp\}$ which is an $(E, \eta)$-threshold partial function. Then, $\CSP(\Gamma)$ can be solved in randomized $O^*_E((2 - \meas(E))^n)$ time, where $n$ is the number of variables.
\end{thm}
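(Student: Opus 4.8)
The plan is to follow the four-step recipe sketched in Section~\ref{subsec:proofoverview}: rewrite the CSP instance as a $0$-$1$ integer program, invoke Theorem~\ref{thm:IP} with the given set $E$ to find an approximate (fractional) solution in $E^n$, and then ``round'' this solution back to a genuine $\{0,1\}$-solution of the CSP using an $(E,\eta)$-threshold partial polymorphism. The runtime is then immediate from Theorem~\ref{thm:IP}, since the only cost beyond the LP solver is polynomial.

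First I would encode a given instance $\Phi$ of $\CSP(\Gamma)$ on variables $z_1,\dots,z_n$ as a linear system $Ax \le b$ over $x \in \{0,1\}^n$ (with $x_i$ the indicator of $z_i = 1$). For each constraint $R \in \Gamma$ of arity $r$ applied to a tuple of variables, we write $R$ as an intersection of half-spaces: since $R \subsetneq \{0,1\}^r$, for each forbidden point $w \in \{0,1\}^r \setminus R$ we add the inequality $\sum_{i : w_i = 1}(1 - x_{\pi(i)}) + \sum_{i : w_i = 0} x_{\pi(i)} \ge 1$, which is violated by a $0$-$1$ point exactly when the restricted tuple equals $w$. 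Thus the $\{0,1\}$-solutions of $Ax \le b$ are exactly the satisfying assignments of $\Phi$; in particular the instance is satisfiable iff the integer program is feasible, and the number of constraints $m$ is polynomial in $n$ (and in $|\Gamma|$, a constant). If $\Phi$ is satisfiable, Theorem~\ref{thm:IP} applied with this $E$ returns, in time $c_E(2-\meas(E))^n \poly(n,m)$, a point $y \in E^n$ with $Ay \le b$; otherwise we need a way to detect unsatisfiability, which I address below.

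The rounding step is the heart of the argument. Given $y \in E^n$, write each coordinate $y_i \in E$ and pick a large integer $L$ (from the infinite family for which $\Gamma$ has an $(E,\eta)$-threshold partial polymorphism $f:\{0,1\}^L \to \{0,1,\perp\}$) together with a vector $\mathbf{w}^{(i)} \in \{0,1\}^L$ whose Hamming weight $\Ham(\mathbf{w}^{(i)})/L$ is very close to $y_i$; since $y_i \in E$, which is a union of intervals each of positive length, for $L$ large enough we can choose $\Ham(\mathbf{w}^{(i)})/L \in E$ and in the same subinterval as $y_i$. Define the candidate CSP assignment by $z_i := f(\mathbf{w}^{(i)}) = \eta(\Ham(\mathbf{w}^{(i)})/L) \ne \perp$. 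To see this satisfies every constraint: fix $R \in \Gamma$ of arity $r$ applied to variables $z_{j_1},\dots,z_{j_r}$. The key point is that the $L$ ``columns'' $x^{(\ell)} := (\mathbf{w}^{(j_1)}_\ell,\dots,\mathbf{w}^{(j_r)}_\ell)$, for $\ell = 1,\dots,L$, each lie in $R$: this is where feasibility $Ay \le b$ is used, via a convexity/counting argument showing that if the average column (roughly $y$ restricted to these coordinates) satisfies the half-space description of $R$, then one can choose the $\mathbf{w}^{(i)}$ consistently (coordinate by coordinate, or by a suitable shared ordering) so that each actual column lies in $R$; then applying $f$ componentwise to $x^{(1)},\dots,x^{(L)} \in R$ yields $(z_{j_1},\dots,z_{j_r}) \in R$ because $f$ is a partial polymorphism and no output is $\perp$. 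I expect \emph{this coordinated choice of the $\mathbf{w}^{(i)}$ making every column simultaneously land in $R$} to be the main obstacle, and the way to handle it is the standard device from \cite{BrakensiekGuruswami2019}: use a common threshold/cyclic construction — set $\mathbf{w}^{(i)}_\ell = 1$ iff $\ell/L < y_i'$ for a common rounding $y_i'$ of $y_i$ — so that for any constraint the multiset of columns is determined by the sorted values $\{y'_{j_1},\dots,y'_{j_r}\}$, and each column is a ``threshold'' point which lies in $R$ precisely because the corresponding $0$-$1$ point is not cut off by any half-space of $R$ (here one checks that a point ruled out by $R$'s description would force one of the LP inequalities for $R$ to fail at $y$). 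Finally, for the decision problem we run Algorithm~2.1 enough times that, if $\Phi$ is satisfiable, we find and verify a satisfying assignment with high probability (verification is polynomial), and otherwise report ``unsatisfiable''; this gives the claimed $O^*_E((2-\meas(E))^n)$ randomized algorithm.
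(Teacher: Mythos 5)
There is a genuine gap in the rounding step, and it interacts with your choice of integer-programming formulation. Your plan needs every ``column'' of the matrix formed by the vectors $\mathbf{w}^{(i)}$ to lie in $R$, and you propose to get this from a common threshold construction ($\mathbf{w}^{(i)}_\ell = 1$ iff $\ell/L < y_i'$) plus the fact that $y$ satisfies the clause-wise ``forbidden point'' inequalities. Neither half of this works. The threshold columns are exactly the indicator vectors of the upper level sets of $(y_{j_1},\dots,y_{j_r})$, and nothing about $y$ being LP-feasible forces such level-set tuples into $R$: for the 1-in-3 relation $R=\{(1,0,0),(0,1,0),(0,0,1)\}$ the point $(1/3,1/3,1/3)$ even lies in the convex hull of $R$, yet its threshold columns include $(1,1,1)\notin R$. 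Moreover, your forbidden-point relaxation is strictly weaker than what is needed: $(1/2,1/2,1/2)$ satisfies all five forbidden-point cuts for 1-in-3 but is not in $\operatorname{conv}(R)$, so feasibility of your $Ay\le b$ does not even certify that the restriction of $y$ to a constraint scope is a convex combination of tuples of $R$. So the ``main obstacle'' you correctly identify is not resolved by the device you propose.

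The paper's proof resolves it differently. It uses the Basic LP relaxation of \cite{BrakensiekGuruswami2019}, whose defining property is that a fractional solution restricted to the scope of each constraint comes with an explicit convex decomposition $(y_{i_1},\dots,y_{i_{\ar}})=\sum_{i=1}^r w_i z^i$ with $z^i\in R$, $\sum_i w_i=1$. The partial polymorphism $f$ is then applied not to threshold vectors but to the tuples $z^1,\dots,z^r$ themselves, each repeated $W^i\approx Lw_i$ times; every column is some $z^i\in R$ by construction, and the coordinate-wise Hamming weights are $Y_{i_j}\approx Ly_{i_j}$, so $f$ outputs $\eta(Y_{i_j}/L)=\eta(y_{i_j})\neq\perp$ and the rounded assignment satisfies $R$. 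One further technical point you gloss over: because the integer weights $W^i$ perturb $Y_{i_j}/L$ away from $y_{i_j}$, the paper invokes Theorem~\ref{thm:IP} on a slightly shrunken set $E_\delta$ (with $\delta=O(1/(kn))$, which changes the runtime only negligibly) to guarantee $Y_{i_j}/L$ stays in $E$ and in the same subinterval as $y_{i_j}$; choosing ``$L$ large enough'' alone is delicate since $L$ ranges only over a given infinite set and the perturbation arises at the level of the weights $w_i$, not of $y_i$. To repair your proposal you should switch to the Basic LP encoding and the convex-combination rounding; the rest of your outline (reduction, invoking Theorem~\ref{thm:IP}, repetition and verification for the decision version) matches the paper.
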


\begin{proof}
  Given an instance of $\CSP(\Gamma)$ let $Ax \le b$ be its integer programming relaxation where $x \in \{0, 1\}^n$ (e.g., see the Basic LP reduction of \cite{BrakensiekGuruswami2019}). Use Theorem~\ref{thm:IP} to find $y \in E^n_{\delta}$ which also satisfies $Ay \le b$, where we define
  \[
    E_{\delta} = [0, d_1 - \delta] \cup [c_1 + \delta, d_2 - \delta] \cup \cdots \cup [c_k + \delta,  1].
  \]
  This takes $O^{*}_{E_{\delta}}((2 - \meas(E) + 2k\delta)^{n})$ time\footnote{Note the change in the hidden constant depending on $E_{\delta}$ is a function of $\gamma(E_{\delta})$, $\tau(E_{\delta})$, and $\meas(E_{\delta}$, all of which change by a negligible amount if $\delta$ is small.}, which is at most $O^*_E((2 - \meas(E))^{n+1})$ if we take $\delta = O(\frac{1}{kn})$. 

  We claim that $(\eta(y_1), \hdots, \eta(y_n))$ satisfies the original instance of the $\CSP(\Gamma)$. The proof is essentially the same as that of Theorem~4.1 of \cite{BrakensiekGuruswami2019}. In essence, for each clause $(x_{i_1}, \hdots, x_{i_{\ar}}) \in R$, the fact that $(y_1, \hdots, y_n)$ solves the LP relaxation means that there exist assignments $z^1, \hdots, z^r \in R$ and weights  $w_1, \hdots, w_r \in [0, 1]$such that
  \[
    \sum_{i=1}^r w_i = 1 \text{ and } \sum_{i=1}^rw_iz^i = (y_{i_1}, \hdots, y_{i_{\ar}}).
  \]
  Consider $L > n\delta$ such that $f : \{0, 1\}^L \to \{0, 1, \perp\}$ is an $(E, \eta)$-threshold partial function. Then as in \cite{BrakensiekGuruswami2019}, pick weights $W^1, \hdots, W^r$ such that $|W^i - Lw_i| \le 1$ so then
  \[
    \sum_{i=1}^r W_iz^i =: (Y_{i_1}, \hdots, Y_{i_{\ar}}) \approx (Ly_{i_1}, \hdots, Ly_{i_{\ar}}).
  \]
  Because $y_{i_j} \in [c_{\ell}+\delta, d_{\ell}-\delta]$ or $[0, d_1-\delta]$ or $[c_k+\delta, 1]$, we have that $\frac{Y_{i_j}}{L} \in E$ for all $i$ and $\frac{Y_{i_j}}{L}$ will be in the same interval of $E$ as $y_{i_j}$. By plugging in $z^1, \hdots, z^r$ into $f$ with weights $W^1, \hdots, W^r$, we then get that $(\eta(Y_{i_1}/L), \hdots, \eta(Y_{i_r}/L)) = (\eta(y_{i_1}), \hdots, \eta(y_{i_r}))\in R$ which shows that $(\eta(y_1), \hdots, \eta(y_n))$ is a satisfying assignment.
\end{proof}

\appendix

\section{Omitted Proofs}\label{app:omit}

\subsection{Proof of Claim~\ref{claim:submart}}

First, we need to prove a technical inequality.

\begin{prop}\label{prop:calc}
  For all $\tau \ge 1$ and $\epsilon > 0$, we have that
  \[
    \min\left(\frac{1}{\tau^{1+\epsilon}} - \frac{1+\epsilon}{\tau} + \epsilon, \tau^{1+\epsilon} - (1+\epsilon)\tau + \epsilon\right) \ge \delta  = \frac{\epsilon}{2}\left(1 - \frac{1}{\tau}\right)^2.
  \]
\end{prop}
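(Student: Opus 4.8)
The plan is to collapse both expressions inside the $\min$ into a single one‑variable function. Set $g(t) = t^{1+\epsilon} - (1+\epsilon)t + \epsilon$; then the second bracket is exactly $g(\tau)$ and the first is $g(1/\tau)$. Since $1 - 1/\tau = 1 - s$ when $s = 1/\tau$, the two desired inequalities are precisely $g(\tau) \ge \tfrac{\epsilon}{2}(1-1/\tau)^2$ for $\tau \ge 1$, and $g(s) \ge \tfrac{\epsilon}{2}(1-s)^2$ for $s = 1/\tau \in (0,1]$. I would first record the elementary facts $g(1) = 0$, $g'(t) = (1+\epsilon)(t^{\epsilon}-1)$ (so $g'(1)=0$ and $g$ has a global minimum of $0$ at $t=1$), and $g''(t) = \epsilon(1+\epsilon)t^{\epsilon-1} > 0$; these reduce the whole proposition to two quantitative refinements of the trivial bound $g \ge 0$.

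For the branch $t = \tau \ge 1$, I would apply Taylor's theorem with Lagrange remainder at $t = 1$: $g(\tau) = \tfrac12 g''(\xi)(\tau-1)^2$ for some $\xi \in [1,\tau]$. Then I bound $g''(\xi) = \epsilon(1+\epsilon)\xi^{\epsilon-1} \ge \epsilon\,\xi^{\epsilon-1} \ge \epsilon\,\tau^{-1} \ge \epsilon\,\tau^{-2}$, where the middle step uses $\xi \in [1,\tau]$ together with $\epsilon - 1 > -1$ (if $\epsilon \ge 1$ then $\xi^{\epsilon-1}\ge 1$; if $0<\epsilon<1$ then $\xi^{\epsilon-1}$ is decreasing in $\xi$, so $\xi^{\epsilon-1}\ge \tau^{\epsilon-1}\ge \tau^{-1}$). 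This gives $g(\tau) \ge \tfrac{\epsilon}{2}\tau^{-2}(\tau-1)^2 = \tfrac{\epsilon}{2}(1-1/\tau)^2$, which disposes of the second term of the $\min$.

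For the branch $s \in (0,1]$ the crude Taylor bound fails (when $\epsilon > 1$ the factor $\xi^{\epsilon-1}$ can be arbitrarily small near $0$), so instead I would study $h(s) := g(s) - \tfrac{\epsilon}{2}(1-s)^2$ directly. It satisfies $h(1) = 0$, and it suffices to show $h'(s) = (1+\epsilon)s^{\epsilon} - 1 - \epsilon s \le 0$ on $(0,1)$, since then $h$ is non‑increasing and $h(s) \ge h(1) = 0$. For $\epsilon \ge 1$ one has $s^{\epsilon} \le s$ (as $0<s\le 1$), hence $(1+\epsilon)s^{\epsilon} \le (1+\epsilon)s = s + \epsilon s \le 1 + \epsilon s$. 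For $0 < \epsilon < 1$, weighted AM–GM on $s$ and $1$ gives $s^{\epsilon} \le \epsilon s + (1-\epsilon)$, so $(1+\epsilon)s^{\epsilon} \le (1+\epsilon)(1-\epsilon+\epsilon s) = 1 - \epsilon^2(1-s) + \epsilon s \le 1 + \epsilon s$. (Alternatively one can argue via the sign pattern of $h''(s) = \epsilon\big((1+\epsilon)s^{\epsilon-1}-1\big)$, which is either positive throughout $(0,1]$ when $\epsilon \le 1$, or changes sign exactly once when $\epsilon > 1$, making $h'$ unimodal with $h'(0^+) = -1$ and $h'(1)=0$.) This handles the first term, and combining the two branches proves the proposition.

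The main obstacle is exactly this $\epsilon > 1$, $s \in (0,1]$ case: it is the one spot where the naive convexity/Taylor argument genuinely breaks and one must instead exploit the monotonicity of $h$ (or the $s^{\epsilon}\le s$ / AM–GM shortcut). Everything else is routine calculus once the reduction to the single function $g$ is in place.
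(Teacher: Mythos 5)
Your proof is correct, and it takes a genuinely different route from the paper's. The paper clears denominators (multiplying both inequalities by $\tau^2$) and treats the two bounds as separate one-variable expressions on $[1,\infty)$, proving each is nonnegative by checking the value at $\tau=1$ and then sign-analyzing derivatives: a Jensen-type inequality for the first expression and a computation up to the third derivative for the second. You instead observe that both terms of the $\min$ are the single function $g(t)=t^{1+\epsilon}-(1+\epsilon)t+\epsilon$ evaluated at $t=\tau$ and $t=1/\tau$, handle the branch $t=\tau\ge 1$ by Taylor's theorem with Lagrange remainder (bounding $g''(\xi)\ge \epsilon\tau^{-2}$), and handle the branch $t=s=1/\tau\in(0,1]$ by showing $h(s)=g(s)-\tfrac{\epsilon}{2}(1-s)^2$ is non-increasing, using $s^{\epsilon}\le s$ when $\epsilon\ge 1$ and weighted AM--GM when $\epsilon<1$. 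Your version avoids the third-derivative bookkeeping, and it cleanly covers all $\epsilon>0$: note that the paper's Jensen step uses $\epsilon$ and $1-\epsilon$ as convex-combination weights, so as written it requires $\epsilon\le 1$ (harmless for the application, where $\epsilon=1/n$, but your case split is what a proof of the proposition as stated actually needs). The paper's argument, on the other hand, is slightly more mechanical once the reduction to polynomial-like expressions is made, requiring no case distinction on $\epsilon$ in the second inequality.
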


\begin{proof}
  This requires proving two separate inequalities.
  \begin{align}
    \frac{1}{\tau^{1+\epsilon}} - \frac{1+\epsilon}{\tau} + \epsilon &\ge \frac{\epsilon}{2}\left(1 - \frac{1}{\tau}\right)^2 \label{eq:1/tau}\\
    \tau^{1+\epsilon} - (1+\epsilon)\tau + \epsilon &\ge \frac{\epsilon}{2}\left(1 - \frac{1}{\tau}\right)^2.\label{eq:tau}
  \end{align}
  By multiplying both equations by $\tau^2$ and moving all the terms to the right side, it suffices to verify that the following polynomial-like expressions are nonnegative for $\tau \in [0, 1)$ (for any fixed $\epsilon \ge 0$).
  \begin{align*}
    g(\tau) &:= \frac{\epsilon}{2}\tau^2 - \tau + \tau^{1-\epsilon} - \frac{\epsilon}{2}\\
    h(\tau) &:= \tau^{3+\epsilon} - (1+\epsilon)\tau^3 + \frac{\epsilon}{2}\tau^2 + \epsilon \tau - \frac{\epsilon}{2}.
  \end{align*}
  Note that $g(1) = h(1) = 0$, so it suffices to show that $g'(\tau), h'(\tau) \ge 0$ for all $\tau \in [1, \infty)$.

  Note that $g'(\tau) = \epsilon \tau + (1-\epsilon)\tau^{-\epsilon} - 1$, but by Jensen's inequality applied to the convex function $\tau^x$, we have that $\epsilon \tau + (1-\epsilon)\tau^{-\epsilon} \ge \tau^{\epsilon + (1-\epsilon)(-\epsilon)} = \tau^{\epsilon^2} \ge 1$. Thus, $g'(\tau) \ge 0$ for all $\tau \in [1, \infty)$, so (\ref{eq:1/tau}) follows.

  For $h$, observe that
  \begin{align*}
    h'(\tau) &= (3+\epsilon)\tau^{2+\epsilon} - 3(1+\epsilon)\tau^2 + \epsilon \tau + \epsilon\\
    h''(\tau) &= (3+\epsilon)(2+\epsilon)\tau^{1+\epsilon} - 6(1+\epsilon)\tau + \epsilon\\
    h'''(\tau) &= (3+\epsilon)(2+\epsilon)(1+\epsilon) \tau^{\epsilon} - 6(1+\epsilon).
  \end{align*}
  Note that $h(1) = h'(1) = 0$ and $h''(1) \ge 0$. Thus, since it is clear that $h'''(\tau)\ge 0$ for all $\tau \in [1, \infty)$, we have that $h(\tau) \ge 0$ for all $\tau \in [1, \infty)$. Thus, (\ref{eq:tau}) follows, too. 
\end{proof}

Now we can prove the claim.

\begin{proof}[Proof of Claim~\ref{claim:submart}]
    Clearly if $Y_t = 0$, this is trivially true, so we now condition that $Y_t \neq 0$.
    
    Fix $x_t \in (0, 1)$ Let $f(x) = x^{1+\epsilon}$. By the convexity of $f$, we know that
    \begin{align*}
      f(x) &\ge f(x_t) + f'(x_t) (x - x_t)\\
      &= x_t^{1+\epsilon} + (1+\epsilon)x_t^{\epsilon}(x - x_t).
    \end{align*}
    Let $L_{x_t}(x) = x_t^{1+\epsilon} + (1+\epsilon)x_t^{\epsilon}(x - x_t)$. Thus,
    \begin{align}
      \mathbb E&[Y_{t+1} | Y_t, \hdots, Y_1 \wedge Y_t \in (0, 1)]\nonumber\\ &= \mathbb E[f(X_{t+1}) | X_t, \hdots, X_1 \wedge X_1 \in (0, 1)]\nonumber\\
                                                                  &= \mathbb E[f(X_{t+1}) - L_{X_t}(X_{t+1}) | X_t, \hdots, X_1 \wedge X_t \in (0, 1)] + \mathbb E[L_{X_t}(X_{t+1}) | X_t, \hdots, X_1 \wedge X_t \in (0, 1)]\nonumber\\
                                                                  &\ge \mathbb E[f(X_{t+1}) - L_{X_t}(X_{t+1}) | X_t, \hdots, X_1 \wedge X_t \in (0, 1)] + L_{X_t}(X_t) \quad \text{(by the submartingale property)} \nonumber\\
                                                                &= \mathbb E[f(X_{t+1}) - L_{X_t}(X_{t+1}) | X_t, \hdots, X_1 \wedge X_t \in (0, 1)] + Y_t.\label{eq:stuff}
    \end{align}
    Now as long as $X_t \in (0, 1)$, we have that
    \begin{align*}
      \frac{f(X_{t+1}) - L_{X_t}(X_{t+1})}{f(X_t)} &= \left(\frac{X_{t+1}}{X_t}\right)^{1+\epsilon} - (1+\epsilon)\frac{X_{t+1}}{X_t} + \epsilon.
    \end{align*}
    Note that the function $h(z) = z^{1+\epsilon} - (1+\epsilon)z + \epsilon$. Is strictly convex in $z$ and takes on minimum value at $z = 1$. We know that $\frac{X_{t+1}}{X_t} \not\in (1/\tau, \tau)$, we have that
    \[\frac{f(X_{t+1}) - L_{X_t}(X_{t+1})}{f(X_t)} = \min\left(\frac{1}{\tau^{1+\epsilon}} - \frac{1+\epsilon}{\tau} + \epsilon, \tau^{1+\epsilon} - (1+\epsilon)\tau + \epsilon\right) \ge \delta > 0,\]
    where we used Proposition~\ref{prop:calc}. In combination with (\ref{eq:stuff}), we have the desired inequality. 
  \end{proof}

\subsection{Proof of Claim~\ref{claim:beta}}

\begin{proof}
  Let $d_0 = 0$ and $c_{k+1} = 1$. Then, for all $i \in \{1, \hdots, k\}$ choose as our distribution $q$
  \[
    q_i = \frac{c_{i+1} - d_{i-1}}{2-\meas(E)}.
  \]
  In words, each $q_i$ is proportional to the length of the interval $[c_i, d_i]$ \emph{plus} the length of the surrounding ``space'' on either side. In particular, each interval of ``space'' is double-counted, so the sum of the $q_i$'s is $\frac{1}{2-\meas(E)}[\meas(E) + 2(1 - \meas(E))] = 1$. Thus, the $q_i$'s correspond to a well-defined probability distribution.

\noindent
  Next, observe that
  \begin{align*}
    \beta(E) &\ge \min(\mathbb E_{i \sim q}[U_0(i)], \mathbb E_{i \sim q}[U_1(i)])\\
             &= \frac{1}{2-\meas(E)}\min\left(\sum_{i=1}^k (c_{i+1} - d_{i-1})\prod_{j=1}^{i-1} \frac{1-c_{j+1}}{1 - d_j}, \sum_{i=1}^k (c_{i+1} - d_{i-1}) \prod_{j=1}^{k - i} \frac{d_{k-j}}{c_{k-j+1}}\right)
  \end{align*}
  Now define
  \[
    a_i = (c_{i+1} - d_{i-1})\prod_{j=0}^{i-1} \left( \frac{1-c_{j+1}}{1-d_j} \right) \quad \text{and} \quad 
    b_i = (c_{i+1} - d_{i-1})\prod_{j=0}^{k-i} \left( \frac{d_{k-j}}{c_{k-j+1}} \right)
    \]
  Note that we can include the $j = 0$ terms since $(1-c_1)/(1-d_0) = 1 = d_k/c_{k+1}.$ We now claim by induction that
\[
    \sum_{i=i_0}^k a_i = (1 - c_{i_0})\prod_{j=0}^{i_0 - 2} \frac{1-c_{j+1}}{1 - d_j} \quad \text{and} \quad 
    \sum_{i=1}^{i_0} b_i = d_{i_0}\prod_{j=0}^{k-i_0-1} \frac{d_{k-j}}{c_{k-j+1}}.
 \]
  For the first sum, the base case of $i_0 = k$ follows immediately, and for the second sum, the base case of $i_0 = 1$ also follows. Then, by the induction hypothesis
  \begin{align*}
    \sum_{i=i_0}^k a_i &= a_{i_0} + \sum_{i=i_0+1}^k a_i = (c_{i_0+1} - d_{i_0 - 1})\prod_{j=0}^{i_0-1} \frac{1-c_{j+1}}{1-d_j}+ (1 - c_{i_0 + 1})\prod_{j=0}^{i_0-1} \frac{1-c_{j+1}}{1 - d_j}\\
                       &= (1 - d_{i_0 - 1})\prod_{j=0}^{i_0-1} \frac{1-c_{j+1}}{1 - d_j} = (1 - c_{i_0}) \prod_{j=0}^{i_0 - 2} \frac{1 - c_{j+1}}{1 - d_j}.
  \end{align*}
  and similarly 
  \begin{align*}
    \sum_{i=1}^{i_0} b_i &= b_{i_0} + \sum_{i=1}^{i_0 - 1} b_i = (c_{i_0+1} - d_{i_0 - 1})\prod_{j=0}^{k-i_0} \frac{d_{k-j}}{c_{k-j+1}}  + d_{i_0 - 1}\prod_{j=0}^{k-i_0} \frac{d_{k-j}}{c_{k-j+1}}\\
                         &= c_{i_0+1}\prod_{j=0}^{k-i_0} \frac{d_{k-j}}{c_{k-j+1}} = d_{i_0}\prod_{j=0}^{k-i_0-1} \frac{d_{k-j}}{c_{k-j+1}}.
  \end{align*}
Thus, $\sum_{i=1}^k a_i = 1 - c_1 = 1$ and $\sum_{i=1}^k b_i = d_k = 1$. Therefore $\beta(E) \ge \frac{1}{2-\meas(E)}$, as desired.
\end{proof}

\section*{Acknowledgments}

We thank Brian Axelrod, Dima Kogan and anonymous reviewers for helpful feedback on the manuscript.

\bibliographystyle{alpha}
\bibliography{stoc19}

\end{document}